\documentclass
  [
    DIV=14,
    fontsize=12,
    abstract=true,
    bibliography=totoc
  ]
  {scrartcl}

\setkomafont{date}{\small\rmfamily}
\setkomafont{author}{\large\sffamily}

\usepackage{authblk}

\newcommand
{\NameEmail}[2]
{%
  \stackunder%
  {#1}%
  {%
   \clap{%
   \normalfont%
   \footnotesize%
   \texttt{#2}%
  }}%
  \hspace{-3.5pt}%
}

\newcommand\keywords[1]{%
  \begingroup
  \renewcommand\thefootnote{}
  \footnote{{Keywords:} #1}%
  \addtocounter{footnote}{-1}
  \endgroup
}
\newcommand\MSCcodes[1]{%
  \begingroup
  \renewcommand\thefootnote{}
  \footnote{{MSC codes:} #1}%
  \addtocounter{footnote}{-1}
  \endgroup
}
\newcommand\funding[1]{%
  \begingroup
  \renewcommand\thefootnote{}
  \footnote{{Funding:} #1}%
  \addtocounter{footnote}{-1}
  \endgroup
}

\usepackage{scrlayer-scrpage}
\clearpairofpagestyles 
\ihead{} 
\ohead{\headmark}
\automark{section}

\NewCommandCopy{\oldparagraph}{\paragraph}
\RenewDocumentCommand{\paragraph}{s t{!} O{#4} m}{%
  \IfBooleanTF{#1}
    {\IfBooleanTF{#2}{\oldparagraph*{#4}}{\oldparagraph*{#4.}}}
    {\IfBooleanTF{#2}{\oldparagraph[#3]{#4}}{\oldparagraph[#3]{#4.}}}%
}

\NewCommandCopy{\oldsubparagraph}{\subparagraph}
\RenewDocumentCommand{\subparagraph}{s t{!} O{#4} m}{%
  \IfBooleanTF{#1}
    {\IfBooleanTF{#2}{\oldsubparagraph*{#4}}{\oldsubparagraph*{#4.}}}
    {\IfBooleanTF{#2}{\oldsubparagraph[#3]{#4}}{\oldsubparagraph[#3]{#4.}}}%
}

\usepackage{caption}
\usepackage
  [leftcaption]
  {sidecap}

\usepackage{enumitem}
\setlist[enumerate,1]{label=\textbf{\textup{(\roman*)}}}
\setlist[enumerate,2]{label=\textbf{\textup{(\alph*)}}}
\setlist{
  topsep=1pt,
  itemsep=2pt,
  leftmargin=1cm,
  before=\leavevmode
}

\usepackage[
  backend=biber,
  style=alphabetic,
  giveninits=true, 
  backref=true
]{biblatex}
\usepackage[T1]{fontenc}
\usepackage{lmodern}
\usepackage{anyfontsize}

\usepackage{multirow}
\usepackage{hhline}

\usepackage{tabularray}
\UseTblrLibrary{booktabs}

\usepackage{adjustbox}
\usepackage{tcolorbox}

\newtcolorbox{standout}{
  colback=gray!15,
  boxrule=0pt,
  left=.3cm,
  right=.3cm,
  top=.18cm,
  bottom=.18cm,
  boxsep=0pt
}

\usepackage{amsmath}
\allowdisplaybreaks
\usepackage{mathtools}
\usepackage{amssymb}
\usepackage{amsthm}
\usepackage{xfrac}
\usepackage{stmaryrd} 
\usepackage{scalerel} 
\usepackage{stackengine} 
\usepackage[safe]{tipa} 
\usepackage{centernot}
\usepackage[
  cal=euler,
  scr=dutchcal
]
 {mathalpha} 

 \newcommand{\bracket}[3]{%
  \stretchleftright
    {#1}
    {%
      \ensurestackMath{\addstackgap[1pt]{#2}}%
      \vrule width 0pt depth 2pt height 0pt
    }
    {#3}%
} 

\newcommand{\bracketmid}[4]{%
  \stretchleftright{#1}{%
    \ensurestackMath{%
      \addstackgap[1pt]{#2}%
      \,\stretchrel*{|}{\addstackgap[1pt]{#2#3}}\,%
      \addstackgap[1pt]{#3}%
    }%
  }{#4}%
}

\newcommand{\bracketmidtwo}[5]{%
  \stretchleftright{#1}{%
    \ensurestackMath{%
      \addstackgap[1pt]{#2}%
      \,\stretchrel*{|}{\addstackgap[1pt]{#2#3#4}}\,%
      \addstackgap[1pt]{#3}%
      \,\stretchrel*{|}{\addstackgap[1pt]{#2#3#4}}\,%
      \addstackgap[1pt]{#4}%
    }%
  }{#5}%
}

\theoremstyle{plain}
\newtheorem{theorem}{Theorem}[section]
\newtheorem{lemma}[theorem]{Lemma}
\newtheorem{proposition}[theorem]{Proposition}

\theoremstyle{definition}
\newtheorem{definition}[theorem]{Definition}
\newtheorem{example}[theorem]{Example}

\theoremstyle{remark}
\newtheorem{remark}[theorem]{Remark}
\usepackage[
  colorlinks=true,
  linkcolor=darkgreen,
  citecolor=darkgreen,
  urlcolor=darkgreen
]{hyperref}
\usepackage{xurl} 

\DeclareSourcemap{
  \maps[datatype=bibtex]{
    \map{
      \step[fieldsource=url, final=true]
      \step[fieldset=verba, origfieldval, final=true]
      \step[fieldsource=verba, match=\regexp{\Ahttps?://}, replace={}]
    }
  }
}
\DeclareFieldFormat{url}{%
  \mkbibacro{URL}\addcolon\space
  \href{#1}{\nolinkurl{\thefield{verba}}}%
}

\usepackage{cleveref}
\crefname{equation}{}{}
\crefname{section}{\S}{\S\S}
\crefname{subsection}{\S}{\S\S}
\crefname{subsubsection}{\S}{\S\S}
\crefname{definition}{Def.}{Defs.}
\crefname{theorem}{Thm.}{Thms.}
\crefname{corollary}{Cor.}{Cors.}
\crefname{lemma}{Lem.}{Lems.}
\crefname{proposition}{Prop.}{Props.}
\crefname{remark}{Rem.}{Rems.}
\crefname{notation}{Ntn.}{Ntns.}
\crefname{fact}{Fact}{Facts}
\crefname{example}{Ex.}{Exs.}
\crefname{figure}{Fig.}{Figs.}
\crefname{table}{Tab.}{Tabs.}
\crefname{footnote}{ftn.}{ftns.}
\Crefname{footnote}{Ftn.}{Ftns.}

\usepackage{tikz}
\usetikzlibrary{
  cd,
  calc,
  arrows.meta,
  backgrounds,
  decorations,
  decorations.pathmorphing, 
}

\tikzcdset{
  arrow style=tikz, 
  diagrams={
    cramped,
    baseline=
    {([yshift=-2pt]current bounding box.center)},
    >={
      Computer Modern Rightarrow[
        length=4pt, width=4pt
      ]
    },
    hook/.style={
      {Hooks[right, length=2pt, width=8pt]}->
    },
    hook'/.style={
      {Hooks[left, length=2pt, width=8pt]}->
    },
    shorten=0pt,
  }
}

\usepackage{xcolor}
\definecolor{darkblue}{rgb}{0.05,0.25,0.65}
\definecolor{darkgreen}{RGB}{20,140,10}
\definecolor{lightgray}{rgb}{0.9,0.9,0.9}
\definecolor{darkorange}{RGB}{200,100,5}
\definecolor{darkyellow}{rgb}{.91,.91,0}
\definecolor{lightolive}{RGB}{225, 220, 185}

\let\originalsslash\sslash
\renewcommand{\sslash}{\mathord{\originalsslash}}

\makeatletter
\newcommand{\cpt}{\mathpalette\cpt@inner\relax}
\newcommand{\cpt@inner}[2]{%
  \scalebox{0.5}[0.9]{$#1\cup$}
  #1\{\infty\}
}
\makeatother

\makeatletter
\newcommand{\plus}{\mathpalette\sqcpt@inner\relax}
\newcommand{\sqcpt@inner}[2]{%
  \scalebox{0.5}[0.9]{$#1\sqcup$}
  #1\{\infty\}
}
\makeatother

\tikzset{
  snake left/.style={
    rounded corners,
    to path={
      let \p1 = (\tikztostart.east),
          \p2 = (\tikztotarget.west),
          \p3 = ($(\p1)!0.5!(\p2)$),
          \n1 = {8pt} 
      in
      (\p1)
      -- (\x1 + \n1, \y1)
      -- (\x1 + \n1, \y3)
      -- (\x2 - \n1, \y3) \tikztonodes
      -- (\x2 - \n1, \y2)
      -- (\p2)
    }
  }
}

\tikzset{
  uphordown/.style={
    rounded corners,
    to path={
      let \p1 = (\tikztostart.north),
          \p2 = (\tikztotarget.north),
          \n1 = {max(\y1,\y2) + 8pt}
      in
      (\p1)
      -- (\x1, \n1)
      -- (\x2, \n1) \tikztonodes 
      -- (\p2)
    }
  }
}

\tikzset{
  downhorup/.style={
    rounded corners,
    to path={
      let \p1 = (\tikztostart.south),
          \p2 = (\tikztotarget.south),
          \n1 = {min(\y1,\y2) - 8pt}
      in
      (\p1)
      -- (\x1, \n1)
      -- (\x2, \n1) \tikztonodes 
      -- (\p2)
    }
  }
}

\tikzset{
  rightvertleft/.style={
    rounded corners,
    to path={
      let \p1 = (\tikztostart.east),
          \p2 = (\tikztotarget.east),
          \n1 = {max(\x1,\x2) + 8pt}
      in
      (\p1)
      -- (\n1, \y1)
      -- (\n1, \y2) \tikztonodes 
      -- (\p2)
    }
  }
}

\tikzset{
  leftvertright/.style={
    rounded corners,
    to path={
      let \p1 = (\tikztostart.west),
          \p2 = (\tikztotarget.west),
          \n1 = {min(\x1,\x2) - 8pt}
      in
      (\p1)
      -- (\n1, \y1)
      -- (\n1, \y2) \tikztonodes 
      -- (\p2)
    }
  }
}

\newcommand{\inlinetikzcd}[1]{\begin{tikzcd}[sep=small, ampersand replacement=\&]#1\end{tikzcd}}

\renewcommand{\setminus}{-}

\newcommand{\defneq}{\mathrel{\equiv}}

\newcommand{\HilbertSpace}{%
  \mathcal{H}%
}

\colorlet{IntBraneColor}{darkblue}
\colorlet{FracBraneColor}{darkorange}

\begin{document}

\setlength{\abovedisplayskip}{3.5pt}
\setlength{\belowdisplayskip}{3.5pt}
\setlength{\abovedisplayshortskip}{-5pt}
\setlength{\belowdisplayshortskip}{3pt}



\title
{Topological Quantum Gates on 
\\
FCIs with Bounded Quantum Geometry}

\author[1,2]
{\NameEmail
  {Hisham Sati}
  {hsati@nyu.edu}
}

\author[1]
{\NameEmail
  {Urs Schreiber}
  {us13@nyu.edu}
}

\affil[1]{Mathematics Program and Center for Quantum and Topological Systems (CQTS), \newline New York University Abu Dhabi, UAE}
\affil[2]{The Courant Institute of Mathematical Sciences, \newline New York University, New York, USA}

\maketitle

\begin{abstract}
Two-band 2D fractional Chern insulators (FCIs) have recently been shown, in principle, to admit unusual anyonic order, associated with Bloch Hopfions localized in momentum space. Here we show that bounding the integrated quantum metric along one primitive Brillouin-zone cycle may refine this topological order to $|C|$ distinct braiding observables, cyclically permuted by lattice translations, for Chern numbers $|C| > 1$. The cyclic structure entails mildly non-abelian braiding phases and otherwise matches the spectral flow exhibited by the FCI's hybrid Wannier states, with the metric bound equivalently a bound on the states' real-space spread. 

We give a classification of quantum gates in principle implementable on this topological order; among them we find the non-Clifford T-gate and the non-Clifford entangling CCZ-gate. This points to novel pathways to fault-tolerant quantum processes within the known orbital structure of FCIs, potentially informing experimental searches towards hardware for future topological quantum computers.
\end{abstract}

\keywords{
  anyons,
  topological quantum gates,
  holonomic quantum computing,
  topological order,
  fractional Chern insulators,
  Hopfions
}

\MSCcodes{
  Primary:
  81V27, 
  81P65, 
  81V70; 
 Secondary:
  55P35, 
  81Q70 
}

\funding{
  by \textit{Tamkeen UAE} under the \textit{Abu Dhabi Research Institute Grant} \texttt{CG008}
}

\newpage
\tableofcontents

\section
{Motivation}

\paragraph
{The Prospect of FCI Topological Order}
Fractional Chern insulators (FCIs, \cite{Tang2011,Sun2011,Neupert2011}, cf. the review in \cite{Neupert2015}) are potentially topologically ordered \cite{Wen1995} crystalline topological quantum materials \cite{GuptaSaxena18}, which realize a momentum-space analog of fractional quantum Hall liquids
(FQH, \cite{Stormer99}). The latter provide firmly established experimental
realizations of abelian \emph{anyons} \cite{Goldin2023}, whose characteristic fractional statistics have been observed by independent groups in $\mathrm{GaAs}$ heterostructures and in graphene%
~\cite{Bartolomei2020,Nakamura2020,Werkmeister2025}.

While analogous (dual, cf. \cite[Fig. 3]{SS25-FQAH}), FCIs, as recently realized in twisted $\mathrm{MoTe}_2$ and
rhombohedral graphene~\cite{Park2023,Lu2024},
do not require a strong external magnetic field and have the prospect of operating at less extreme temperatures \cite{Tang2011} than FQH liquids. These properties make FCIs a potentially realistic platform for
future quantum hardware for topological quantum computing devices (\cite{Kitaev2003,FreedmanKitaevLarsenWang2003,Nayak2008}, cf. \cite{Urton2023})---provided suitable anyonic excitations in FCIs can be identified and controlled. This remains to be seen experimentally,%
\footnote{
Recent optical experiments provide encouraging evidence for fractional charges in FCI states~\cite{Ki2026}, but their braiding statistics, if any, remain to be observed.
}
but crucially it already remains to be properly understood theoretically.

\paragraph
{New Route to Nonabelian Topological Order}
In particular, universal topological quantum gates require universal \emph{nonabelian} anyons (\cite{FreedmanLarsenWang2002}, cf. \parencites{Bravyi2006}{MySS2024}) whose actual experimental realization has remained open, even theoretically.%
\footnote{
The much discussed FQH $\nu = \sfrac{5}{2}$ state, while theoretically expected to feature non-universal Ising anyons, appears unstable and remains experimentally elusive \parencites{Pan1999}[p. 1]{TrungYang2026}.
Note that recent claims of nonabelian anyon experiments (such as \cite{Lo2026universal}) concern their quantum \emph{simulation} on noisy non-topological hardware, cf. \parencites[p. 1470]{Xu2024}{GammonSmith2026}.}

Recently, by revisiting theoretical foundations (briefly recalled in \cref{FCIHopfionAnyons} below), we have observed \parencites{SS25-FQAH,SS25-Crys} that 2D 2-band FCIs may potentially exhibit a non-traditional but theoretically well-motivated form of topological order associated with anyonic solitons localized in the crystal's momentum space, namely in the form of \emph{Hopfion} structures \cite{WilczekZee1983} in the Bloch Hamiltonian, cf. \cref{TheHopfionGroup} below. 

This derivation \cite{SS25-FQAH}, however, relied on the idealized assumption that the system's topological monodromy is evaluated within the entire space of mathematically admissible Bloch Hamiltonians, while realistic systems will be subject to bounds.

Here we consider the more physically realistic situation where a bound is imposed on the system's \emph{quantum metric} (recalled in \cref{FCIQuantumGeometry} below) integrated along one primitive Brillouin-zone cycle.
Our main result (in \cref{Results}) is that this bound actually enhances the Hopfionic topological order to potentially enable the realization of topological quantum gates such as the T-gate. 

\begin{remark}
  The question of identifying microscopic (lattice) models potentially realizing the parameter monodromy, which we classify here on more abstract grounds, is beyond the scope of this letter. We suggest reading our abstract classification result as mathematically establishing novel theoretical possibilities for anyonic topological order which may be worth the attention of model builders and experimentalists.
\end{remark}

In particular, we see that this enhanced topological order goes along with a cyclic layering of FQH phases which is of just the same form as expected when translating hybrid FCI Wannier states into Laughlin states for FQH systems. We first recall this layering in \cref{FCIQuantumGeometry}.

\section
{Methods}
\label{Methods}

\subsection
{FCI Quantum Geometry}
\label
{FCIQuantumGeometry}

We briefly recall how \emph{quantum geometry} (cf. \cite{Kolodrubetz2017}) controls the localization of band-projected $x$-position eigenstates in 2D crystalline materials, and how these same states exhibit an effective cyclic layer of QH states, an observation due to \parencites[\S I]{Qi2011}[\S I]{BarkeshliQi2012}.

\paragraph
{Quantum Metric}

The \emph{quantum metric} \cite{ProvostVallee1980} of a smooth family of pure quantum
states (hence of rays in a given Hilbert space $\HilbertSpace$)
\begin{equation}
  \begin{tikzcd}[
    sep=0pt
  ]
    \Sigma
    \ar[
      rr,
      "{  
      }"
    ]
    &&
    P{
      \HilbertSpace
    }
    \\
    \sigma 
      &\mapsto&
    \bracket|{\psi_\sigma}\rangle
    \bracket\langle{\psi_\sigma}\vert
    \mathrlap{
      \,,
      \;\;\;
      \bracketmid\langle{
        \psi_\sigma
      }{
        \psi_\sigma
      }\rangle
      = 
      1
      \,,
    }
  \end{tikzcd}
\end{equation}
is the pullback (to $\Sigma$)
of the Fubini--Study metric on the complex projective space $P\HilbertSpace$, measuring the infinitesimal distinguishability of neighboring states.

Specialized to crystalline band theory, the parameters are crystal momenta
$\mathbf{k} \in T^d$ varying in the \emph{Brillouin torus} (cf. \parencites[\S 2.1]{Thiang2025} and \cref{TheBrillouinTorus}), and the states are the cell-periodic Bloch
eigenstates $\bracket\vert{u(\mathbf{k})}\rangle$ of an isolated band. For a locally chosen normalized
eigenvector $\bracket|{u(\mathbf{k})}\rangle$, with
projector $P(\mathbf{k}) = \bracket|{u(\mathbf{k})}\rangle\bracket\langle {u(\mathbf{k})}|$,
the quantum metric is
\begin{equation}
  \label{TheQuantumMetric}
  g_{ij}(\mathbf{k})
  =
  \operatorname{Re}
  \bigl\langle
    \partial_{k_i}u(\mathbf{k})
    \big|
    \bigl(1-P(\mathbf{k})\bigr)
    \big|
    \partial_{k_j}u(\mathbf{k})
  \bigr\rangle .
\end{equation}
This expression measures the variation of the band eigenspace across the Brillouin zone.

\paragraph
{Hybrid Wannier states}
This momentum-space geometry determines a real-space localization scale: For a single isolated band in two dimensions,
the maximally localized \emph{hybrid Wannier states} form a family
\begin{equation}
  \label{HybridWannierStates}
  \bracket|{
    W_n(k_y)
  }\rangle,
  \qquad 
  n \in \mathbb{Z}
  \mathrlap{\,,}
\end{equation}
where $n$ labels lattice translations along $x$.
These states are localized along $x$ while retaining transverse
crystal momentum $k_y$, and are eigenstates of the
band-projected position operator
$\widehat P\hat x\widehat P$, where $\widehat P$ projects onto
the full band~\cite[Sec.~II.H]{MarzariEtAl2012}.

Applying the one-dimensional spread identity of \cite[Sec.~III and App.~C]{MarzariVanderbilt1997} at each fixed $k_y$ identifies the physical position variance
of these hybrid Wannier states with the $x$-average of the
quantum metric:
\begin{equation}
  \label{HybridWannierSpreadQuantumMetric}
  \mathrm{Var}_{W_n(k_y)}(\hat x)
  =
  \frac{a_x}{2\pi}
  \int_0^{2\pi/a_x} g_{xx}(k_x,k_y)
  \,
  \mathrm{d}k_x
  =:
  \overline{g}_{xx}(k_y),
\end{equation}
where $a_{x/y}$ denote the lattice periods, cf. \cref{TheBrillouinTorus}.
Consequently:
\begin{standout}
A uniform bound on the $x$-averaged quantum metric is equivalent to a uniform bound on the physical $x$-position variance of the hybrid Wannier eigenstates of projected position:
\begin{equation}
\label
{MetricBoundIsWannierSpreadBound}
  \overline{g}_{xx}(k_y)\leq\ell^2
      \;\;\;\text{for all }k_y
  \;\;\;\;\;
    \Leftrightarrow
  \;\;\;\;\;
   \mathrm{Var}_{W_n(k_y)}(\hat x)\leq\ell^2
     \;\;\;\text{for all }k_y,n .
\end{equation}
\end{standout}
Imposing the same bound throughout the allowed Hamiltonian
deformations ensures that these eigenstates retain a
root-mean-square width along $x$ of at most $\ell$.

\begin{SCfigure}[2.8][htb]
\adjustbox{
  rndfbox=4pt,
  raise=-.5cm
}{
\begin{tikzpicture}[
  >={
    Computer Modern Rightarrow[
      length=4pt, width=4pt
    ]
  }
]

\draw[
  fill=lightgray,
]
  (0,0) rectangle (2,2);

\node at (1,.8) {%
  \clap{\smash{$T^2$}}%
};

\draw[
  line width=3,
  color=darkorange
] 
  (0,0) --
  (0,2);
\draw[
  ->,
  line width=1
]
  (0,1) --
  (0,1.01);

\draw[
  line width=3,
  color=darkorange
] 
  (2,0) --
  (2,2);
\draw[
  ->,
  line width=1
]
  (2,1) --
  (2,1.01);

\draw[
  line width=3,
  color=darkgreen
] 
  (0,2) --
  (2,2);
\draw[
  ->,
  line width=1
]
  (1,2) --
  (1.01,2);

\draw[
  line width=3,
  color=darkgreen
] 
  (0,0) --
  (2,0);
\draw[
  ->,
  line width=1
]
  (1,0) --
  (1.01,0);

\draw[
  fill=black
] 
(0,0) circle (.1);
\draw[
  fill=black
] 
(2,0) circle (.1);
\draw[
  fill=black
] 
(0,2) circle (.1);
\draw[
  fill=black
] 
(2,2) circle (.1);

\draw[
  -Latex
]
  (-.5,-.3) 
  to node[
    below,
    scale=.8
  ]{ $k_x$ }
  (2.5, -.3);
\draw[
  -Latex
]
  (-.3,-.5) 
   to node[
     left,
     scale=.8
   ]
     {$k_y$}
  (-.3, 2.5);

\node[scale=.8] at (0,-.4) 
  {$\overset{\vert}{0}$};
\node[scale=.8] at (2,-.51) 
  {$\overset{\vert}{\tfrac{2\pi}{a_x}}$};
\node[scale=.8] at (-.4,0)
  {$0 -$};
\node[scale=.8] at (-.48,1.98)
  {$\tfrac{2\pi}{a_y} -$};

\node[scale=.7] at (.9,.2) {$W_a$};
\node[scale=.7] at (.3,.9) {$W_b$};

\end{tikzpicture}
}
\caption{\label{TheBrillouinTorus}%
The Brillouin zone of an effectively two-dimensional crystalline system is a torus $T^2$, depicted here by a fundamental domain with opposite edges identified.
The momentum coordinates $k_x$ and $k_y$ have reciprocal periods
$2\pi/a_x$ and $2\pi/a_y$, respectively; the displayed aspect ratio is schematic.
Below, in \cref{Results}, we constrain the $k_x$-averaged quantum metric
$\overline{g}_{xx}(k_y)$ \cref{TheQuantumMetric} uniformly in $k_y$, selecting the horizontal cycle as the ``small'' momentum direction. The associated momentum 1-cycles are $W_a$ for this bounded direction and $W_b$ for the unrestricted vertical direction, cf. \cref{FigureTorusAlgebra}.}
\end{SCfigure}

\paragraph
{Cyclic QH Layers of FCI states}
A striking insight of Qi and Barkeshli--Qi is that the hybrid
Wannier states $\bracket\vert{ W_n(k_y) }\rangle$ \cref{HybridWannierStates} of a single Chern band exhibit an effective
multilayer quantum Hall structure~\cite{Qi2011,BarkeshliQi2012} in that:
\begin{enumerate}
\item 
the integer index $n$ in \cref{HybridWannierStates} labels successive lattice translates:
\begin{equation}
  x_n(k_y)
  :=
  \bracketmidtwo\langle 
    {W_n(k_y)}
    {\hat x}
    {W_n(k_y)}
  \rangle
  =
  n \, a_x
    +
  \overline{x}(k_y),
\end{equation}
where 
$\overline{x}(k_y)$ is a common offset determined by the Berry phase
around the $k_x$-cycle;

\item
for a band of Chern number $C \neq 0$, traversing the
transverse Brillouin-zone cycle shifts each center by
$C$ lattice cells, so that \cite[Eqs.~(2)--(3)]{Qi2011}
\begin{equation}
  \label{HybridWannierSpectralFlow}
  \bracket|{
    W_n\bracket({
      k_y
        +
      2\pi/a_y
    })
  }\rangle
  =
  \bracket|{
    W_{n+C}(k_y)
  }\rangle
  \mathrlap{\,.}
\end{equation}

Continuation in $k_y$ therefore joins the orbitals into
$|C|$ families, distinguished by the residue
\begin{equation}
  n\bmod C \in \mathbb{Z}_{/C}
  \mathrlap{\,.}
\end{equation}
These are the effective \emph{layers},
cyclically permuted upon translation $T_x$ by one lattice cell (cf. \cref{LayersSchematics}):
\begin{equation}
\label{TranslationAndCyclicPermutation}
  T_x
  \bracket|{W_n(k_y)}\rangle
  =
  \bracket|{W_{n+1}(k_y)}\rangle,
  \qquad
  n \mapsto  n+1
  \mod C
  \quad\text{in }\mathbb{Z}_{/C}.
\end{equation}
Thus continuation around the transverse momentum cycle
connects orbitals within each layer, while lattice translation
cycles between the layers~\cite[Sec.~II]{BarkeshliQi2012}.

\begin{SCfigure}[3.3][htb]
\caption{\label{LayersSchematics}%
  The eigenstates of projected $x$-position operators in an FCI at Chern number $C \neq 0$ exhibit an orbital representation consisting of $|C|$ Quantum Hall layers that are cyclically permuted under lattice translation in the $x$-direction \cref{TranslationAndCyclicPermutation}.
}
\adjustbox{
  scale={.7}{1.4},
  rotate=90,
  rndfbox=4pt
}{
\begin{tikzpicture}[
  scale=0.85,
]

  \draw[line width=1.5]
    (.5,0) .. controls
    (.5,1) and
    (0,1) ..
    (0,2);
  \draw[line width=1.5]
    (1,0) .. controls
    (1,1) and
    (.5,1) ..
    (.5,2);

  \node at 
    (1.4,.8) {\textbf{\dots}};

  \draw[line width=1.5]
    (2,0) .. controls
    (2,1) and
    (1.5,1) ..
    (1.5,2);

  \draw[
    line width=6,
    white
  ]
    (0,0) .. controls
    (0,1) and
    (2,1) ..
    (2,2);
  \draw[line width=1.5]
    (0,0) .. controls
    (0,1) and
    (2,1) ..
    (2,2);

\end{tikzpicture}
}
\end{SCfigure}

\item 
With this, one finds \cite[below (5)]{Qi2011} that these Wannier states in a layer
\[
  n=i+Cq,
  \qquad
  i\in\{0,\ldots,|C|-1\},
  \qquad q\in\mathbb{Z},
\]
correspond to Quantum Hall Landau orbitals at transverse momentum 
\begin{equation}
  K 
    := 
  k_y 
    + 
  q 
  \tfrac{2\pi}{a_y}
  \mathrlap{\,.}
\end{equation}

Multilayer FQH wavefunctions can then be transferred to the Chern band by replacing their Landau-level orbitals
with these Wannier orbitals while retaining their occupation-number coefficients.

\end{enumerate}

This layered organization is encoded in the spectral flow of projected position; the metric bound in
\cref{HybridWannierSpreadQuantumMetric} controls the physical
widths of the orbitals that realize it.
(We will see this structure reappear from novel topological considerations in \cref{LLsmFundamentalGroups} below.)

\subsection
{FCI Hopfion Anyons}
\label
{FCIHopfionAnyons}

But this raises a key question that may not have been addressed before:
\begin{standout}
  If the bounded quantum geometry of FCIs is effectively that of cyclic FQH layers, how may the corresponding layers of FQH anyons arise?
\end{standout}

The answer we consider here crucially draws on the recent observation \cite{SS25-FQAH,SS25-Crys} that 2-band FCIs in principle admit momentum-space localized anyons in the guise of Hopfion solitons in the Bloch Hamiltonian. We briefly recall how this comes about.

\paragraph
{Topological Order from Parameter Monodromy}

Anyonic statistics may be understood as a special case of the more general
notion of \emph{topological adiabatic holonomy/monodromy}
(\parencites{ZanardiRasetti1999}{PachosZanardiRasetti2000}{PachosZanardi2001}, cf.
\parencites[\S 3]{MySS2024}[\S II.B]{SS25-Crys}).
Consider a family of gapped many-body ground-state spaces
$\mathcal H_p$, depending on parameters $p\in\mathcal P$.
When the topological part of adiabatic transport depends only
on the homotopy class of the parameter path, these spaces
form a \emph{local system}, equivalently a flat unitary
vector bundle over $\mathcal P$.
Its holonomy on each connected component is a representation of the \emph{fundamental group} of continuous deformation paths of the system's parameters in topological phase $C \in \pi_0\bracket({ \mathcal{P} })$:
\begin{equation}
\label{GroundStateMonodromy}
\inlinetikzcd{
  \widehat{(\text{--})}
  :
  \pi_1\bracket({\mathcal P,C})
  \ar[r]
  \&
  \mathrm U(\mathcal H_{C})
\mathrlap{\,.}
}
\end{equation}

In the familiar case where the parameters are configurations of $N$ quasiparticles on a surface $\Sigma$,
\begin{equation}
  \mathcal{P}_{\mathrm{trd}}
  \defneq
  \bracketmid\{{
    (x_1, \cdots, x_N)
    \in
    \Sigma^N
  }{
    \forall_{i \neq j} :  x_i \neq x_j
  }\}
  \,,
  \;\;
  \Rightarrow
  \;\;
  \pi_1\bracket({
    \mathcal{P}_{\mathrm{trd}}
  })
  \simeq
  \mathrm{PBr}\bracket({\Sigma})
  \,,
\end{equation}
the fundamental group \cref{GroundStateMonodromy} is the surface's (pure) \emph{braid group}, whose unitary representations are the familiar anyon representations.
But parameter monodromy describes topological order more generally.

In particular, a crystalline system's external parameters are instead hopping amplitudes or crystal potentials, encoded by single-electron \emph{Bloch Hamiltonians} $H$, and topologically by their \emph{spectral flattenings} $\widehat{H}$, which are continuous maps $\inlinetikzcd{T^d \ar[r] \& \mathrm{Gr}_v\bracket({\mathbb{C}^N})}$:
\begin{enumerate}
\item
from the crystal's \emph{Brillouin torus} $T^d$ of crystal quasi-momenta,

\item 
to the \emph{Grassmannian space} $\mathrm{Gr}_v\bracket({\mathbb C^N})$ of $v \in \mathbb{N}$ valence states among $N$ available bands, 
\end{enumerate}
and hence form the $d$-fold free loop space $L^d$ of the Grassmannian:
\begin{equation}
  \label
  {BlochHamiltonianParameterSpace}
  \mathcal{P}_{\mathrm{crs}}
  \defneq
  \left\{
  \begin{tikzcd}[
    sep=-2pt
  ]
    T^d 
    \ar[rr]
    &&
    \mathrm{Gr}_{v}\bracket({
      \mathbb{C}^N
    })
    \\
    \mathbf{k}
    &\longmapsto&
    \widehat H(\mathbf{k})
  \end{tikzcd}
  \right\}
  =:
  \operatorname{Map}
  \bracket({
    T^d,
    \operatorname{Gr}_v\bracket({
      \mathbb C^N
    })
  })
  \simeq
  L^d 
    \operatorname{Gr}_v\bracket({
      \mathbb C^N
    })
  \mathrlap{\,.}
\end{equation}
Hence for pure crystalline systems without further real-space defects, the available parameter monodromy is that of deformations of (spectrally flattened) Bloch Hamiltonians:
\begin{equation}
\label{EquivariantMonodromyInIntroduction}
\pi_1\bracket({
  \mathcal{P}_{\mathrm{crs}},
  \bracket[{\widehat H}]
})
 \,=\,
\left\{
\begin{tikzcd}[
  column sep=30pt]
  T^d
  \ar[
    rr,
    bend left=30,
    phantom,
    "{}"{name=s,swap}
  ]
  \ar[
    rr,
    bend right=30,
    phantom,
    shift left=3pt,
    "{}"{name=t}
  ]
  \ar[
    from=t,to=s,
    shorten=-4pt,
    Rightarrow,
    shift right=4pt,
    bend right=60,
    gray,
    scale=.5
  ]
  \ar[
    rr,
    bend left=30,
    "{\widehat H}"{description}
  ]
  \ar[
    rr,
    bend right=30,
    gray,
    shift left=3pt,
    crossing over
  ]
  \ar[
    from=s,to=t,
    shorten=-1pt,
    Rightarrow,
    shift right=4pt,
    bend right=60,
    crossing over
  ]
  &&
  \mathrm{Gr}_v
  \mathrlap{%
    \bracket({%
      \mathbb{C}^N%
    })%
  }
  \phantom{%
    \bracket({%
      \mathbb{C}^N%
    })%
  }
\end{tikzcd}
\right\}
\mathrlap{\,.}
\end{equation}

For example, in a two-band system,
\begin{equation}
\label{TheBlochSphere}
  \operatorname{Gr}_1(\mathbb C^2)
    \defneq
  \mathbb CP^1
    \simeq 
  S^2
\end{equation} 
is the Bloch sphere (cf. \parencites[p. 112]{Sergeev2023}). The observation of \parencites{SS25-FQAH}{SS25-Crys} is that in this case the Bloch monodromy \cref{EquivariantMonodromyInIntroduction} is precisely that expected for abelian anyons on a torus (here the momentum-space Brillouin torus): 
\begin{proposition}[{\parencites[Thm. 1]{SS25-FQAH}[Thm. 3.5]{KSS26-HigherDimAnyons}, following \parencites{Hansen1974}{LarmoreThomas1980}{Kallel2001}}]

\label[proposition]{LLFundamentalGroups}
For 2-dimensional crystalline 2-band systems, with Bloch parameter space given by \cref{BlochHamiltonianParameterSpace,TheBlochSphere}, 
\begin{equation}
\label
{2Dim2BandBlochHamiltonianSpace}
  \mathcal{P}_{\mathrm{crs}} 
    = 
  L L S^2
\end{equation}
\begin{enumerate}
\item
the topological phases are labeled by the integers \textup{(Chern numbers, as usual)}:
\begin{equation}
  \pi_0 
  \bracket({
    L L S^2
  })
  \;\simeq\;
  \mathbb{Z}
  \mathrlap{\,,}
\end{equation}

\item
\label
{LLFundamentalGroupGenerators}
the topological monodromy groups are \emph{integer Heisenberg groups}:%
\footnote{
  For a pair of group elements $g_1, g_2 \in G$, their \emph{group commutator} is $[g_1, g_2] 
    := 
  g_1 \, g_2 \, g_1^{-1} \, g_2^{-1} 
  \in G$.
}
\begin{equation}
\label{TheLLFundamentalGroupGenerators}
  \pi_1 \bracket({
    L L S^2
    ,\,
    C
  })
  \simeq
  \bracketmid\langle{
  \begin{array}{c}
    W_a,
    \\
    W_b,
    \\
    \zeta
  \end{array}
  }{
    \begin{aligned}
      \bracket[{W_a, W_b}]
      & = \zeta^2
      \\
      \bracket[{W_a, \zeta}] & = 1
      \\
      \bracket[{W_b, \zeta}] & = 1
      \\
      \zeta^{2C} & = 1
    \end{aligned}
  \,}\rangle \,.
\end{equation}
\end{enumerate}
\end{proposition}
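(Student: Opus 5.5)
We compute $\pi_0$ and $\pi_1$ of $LLS^2 = \operatorname{Map}(T^2,S^2)$ through the evaluation fibrations. Since $\pi_1(S^2)=0$, free and based maps can be compared directly.

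\textbf{Step 1: reduce to based maps.} Consider the evaluation fibration
\[
\operatorname{Map}_*(T^2,S^2)\to\operatorname{Map}(T^2,S^2)\to S^2 .
\]
Because $S^2$ is simply connected, $\pi_0$ is unchanged by dropping the basepoint condition. For $\pi_1$ there is a short exact piece
\[
\pi_2(S^2)\xrightarrow{\partial}\pi_1(\operatorname{Map}_*,C)\to\pi_1(\operatorname{Map},C)\to 0 .
\]
The boundary map $\partial$ is the action of the generator of $\pi_2(S^2)\simeq\mathbb Z$ obtained by "rotating the sphere." Following Hansen, and Larmore--Thomas, it is multiplication by $2C$ into a central $\mathbb Z$.

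\textbf{Step 2: the based mapping space.} Use the cofibration $S^1\vee S^1\to T^2\to S^2$, which gives a fibration
\[
\Omega^2S^2\to\operatorname{Map}_*(T^2,S^2)\to \Omega S^2\times\Omega S^2 .
\]
Since $\pi_1(\Omega S^2)=\pi_2(S^2)=\mathbb Z$ and $\pi_0(\Omega S^2)=0$, we get:
\begin{enumerate}
\item $\pi_0(\operatorname{Map}_*)=\pi_0(\Omega^2S^2)=\pi_2(S^2)=\mathbb Z$, given by the degree (the Chern number);
\item a central extension
\[
0\to\pi_1(\Omega^2 S^2)=\pi_3(S^2)=\mathbb Z\to\pi_1(\operatorname{Map}_*,C)\to\mathbb Z^2\to0 .
\]
\end{enumerate}
Here $W_a,W_b$ lift the two generators of $\mathbb Z^2$: loops sweeping a $2$-sphere along the $a$- or $b$-cycle. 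The central generator is the Hopf class $\eta$. The extension is central because the fibre inclusion has image in the center; this follows from the standard co-H argument for $T^2$ modulo its $1$-skeleton.

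\textbf{Step 3 (main obstacle): the extension cocycle.} The hard part is showing that the commutator is $[W_a,W_b]=\eta^{\pm 2}$, i.e.\ determining the extension class. The plan is as follows.
\begin{enumerate}
\item The commutator is detected by the attaching map of the top cell of $T^2$, which is the Whitehead commutator $[a,b]$.
\item The loops $W_a$ and $W_b$ give maps $S^1\times S^1\times S^1\to S^2$ extending the given degree-$C$ map. Their obstruction class lies in $\pi_3(S^2)$.
\item That class is computed by the Samelson/Whitehead product $[\iota_2,\iota_2]=2\eta$. Bilinearity then gives the coefficient.
\item To fix the signs and independence from $C$, compare with Kallel's computation, or use the Hopf invariant as a linking number of preimages. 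The preimages of two regular values sweep out and link twice.
\end{enumerate}

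\textbf{Step 4: assemble.} Now quotient by the image of $\partial$ from Step 1. First identify $\partial(1)$ with $\eta^{2C}$. One way is to show that the rotation loop of $S^2$, composed with a degree-$C$ map, has Hopf invariant $2C$; this works because the $SO(3)$ action on $S^2$ realizes $\pi_1(SO(3))\to\pi_3$-data.

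Then $\zeta:=\eta$ gives exactly the presentation \eqref{TheLLFundamentalGroupGenerators}: the commutator $[W_a,W_b]=\zeta^2$, centrality of $\zeta$, and the relation $\zeta^{2C}=1$.

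Any possible extra contribution of $\partial(1)$ in the $W$-directions vanishes, since rotating the target does not change the images of the $1$-cycles up to homotopy.
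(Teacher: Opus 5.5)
Your outline is sound, but it takes a different route from the paper. The paper only quotes this proposition without proof. However, its proof of \cref{LLsmFundamentalGroups} specializes to it once $Q\simeq L_1S^2$ is replaced by $LS^2$, and that proof works as follows.
\begin{enumerate}
\item It reads $\operatorname{Map}(T^2,S^2)=L(LS^2)$ asymmetrically and uses the free-loop evaluation fibration $\Omega LS^2\to L(LS^2)\to LS^2$.
\item The input is $\pi_1(LS^2)=\mathbb{Z}\langle w_a\rangle$ acting on $\pi_2(LS^2)=\mathbb{Z}\langle w_b\rangle\oplus\mathbb{Z}\langle z\rangle$ by $\rho_{w_a}\colon w_b\mapsto w_b+2z$, $z\mapsto z$.
\item In component $C$ the boundary is $1-\rho_{w_a}^C$, with image $2C\,\mathbb{Z}\langle z\rangle$. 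The group is therefore $\bigl(\mathbb{Z}\langle w_b\rangle\oplus\mathbb{Z}/2C\langle z\rangle\bigr)\rtimes\mathbb{Z}\langle w_a\rangle$.
\end{enumerate}
So a single automorphism produces both $[W_a,W_b]=\zeta^2$ and $\zeta^{2C}=1$.

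You instead evaluate at a point and use the cofibration $S^1\vee S^1\to T^2\to S^2$. The two relations then come from two separate instances of the Whitehead square:
\begin{enumerate}
\item $[\iota_2,\iota_2]=2\eta$, as the commutator of the two $S^2$-classes in $[\Sigma T^2,S^2]$ with $\Sigma T^2\simeq S^2\vee S^2\vee S^3$. This is the precise form of your Step 3.
\item $[\iota_2,C\iota_2]=2C\eta$, as the point-evaluation boundary.
\end{enumerate}
Your version is symmetric in the two cycles and extends directly to closed surfaces of any genus, where the commutator pairing becomes twice the intersection form. The paper's version puts the $a$-cycle in the inner loop space. That is exactly what lets it swap $LS^2$ for $L_1S^2$ and obtain the metric-bounded refinement by the same computation. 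A skeletal decomposition of $T^2$ does not adapt directly to a bound imposed along only one family of cycles.

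A few steps need tightening.
\begin{enumerate}
\item In Step 2, the fact $\pi_0(\Omega S^2)=0$ does not give $\pi_0(\operatorname{Map}_*)\cong\mathbb{Z}$ or exactness of $0\to\pi_3(S^2)\to\pi_1(\operatorname{Map}_*,C)\to\mathbb{Z}^2\to0$. The actual reason is as follows: at the constant map, the Puppe sequence identifies the connecting maps out of $\pi_1$ and $\pi_2$ of $\Omega S^2\times\Omega S^2$ with precomposition by $\Sigma$ and $\Sigma^2$ of the attaching map $aba^{-1}b^{-1}$, which are null. Injectivity on $\pi_3(S^2)$ matters, since otherwise $\eta$ could already be torsion before Step 4.
\item For independence of $C$ you do not need Kallel. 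The $\Omega^2S^2$-action from the pinch $T^2\to T^2\vee S^2$, which you already use for centrality, lets you translate by a degree-$C$ element. This is a fibre homotopy equivalence over $\Omega S^2\times\Omega S^2$ from the degree-$0$ to the degree-$C$ component, so the computation in $[\Sigma T^2,S^2]$ transfers unchanged.
\item In Step 4, $\pi_1(SO(3))=\mathbb{Z}/2$ is the wrong group. The factor $2$ comes from $\pi_2(S^2)\to\pi_1(SO(2))$ being multiplication by the Euler number; equivalently, the generator lifts to a double rotation about the axis through the base point.
\item The cleanest version of Step 4 takes $f=c_C\circ q$, with $q\colon T^2\to S^2$ the collapse and $c_C$ of degree $C$. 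Naturality then reduces $\partial$ to G.\,W.\ Whitehead's evaluation boundary $\alpha\mapsto[\alpha,C\iota_2]$ for $\operatorname{Map}(S^2,S^2)$. Moreover, the image of $\pi_2$ of the base in $\pi_1$ of the fibre is always central, so it lies in the center $\langle\eta\rangle$ of the Heisenberg group. That settles your ``$W$-directions'' point without further argument.
\end{enumerate}
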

Here:
\begin{enumerate}
\item 
$W_a,W_b$ describe transport around a basis of torus 1-cycles, cf. \cref{FigureTorusAlgebra};
\item
$\zeta$ generates the phase picked up by an elementary braiding process taking place locally, in a chart $\inlinetikzcd{\mathbb{R}^2 \ar[r, hook] \& T^2 }$ of the torus, mathematically reflected \parencites{SS25-AbelianAnyons}{SS25-WilsonLoops} as a Bloch \emph{Hopfion} \parencites{WilczekZee1983}:
\begin{equation}
\label
{TheHopfionGroup}
  \pi_1
  \,
  \mathrm{Map}^\ast
  \bracket({
    \mathbb{R}^{2}_{\cup \{\infty\}}
    ,
    S^2
  })
  \simeq
  \pi_3\bracket({
    S^2
  })
  \simeq
  \mathbb{Z}
  \langle\zeta\rangle
  \mathrlap{\,.}
\end{equation}
\end{enumerate}
Hence in a superselection sector, being an irreducible unitary representation $\widehat{(\text{--})}$ \cref{GroundStateMonodromy} of the monodromy group \cref{TheLLFundamentalGroupGenerators} on the system's Hilbert space of gapped ground states,
we have
\begin{equation}
\label
{IrrepOfHeisenbergGroup}
 \begin{aligned}
 \widehat{\zeta} 
   & =
   e^{i\theta}\mathbf 1
 \\
 \widehat W_a \widehat W_b
 & =
 e^{2i\theta}
 \widehat W_b
 \widehat W_a,
 \end{aligned}
\end{equation}
for $e^{i \theta}$ a root of unity (assuming a finite-dimensional space of gapped ground states, cf. \cite[Lem. 3.32]{SS25-FQH}).
Remarkably, this is exactly the torus Wilson-loop algebra expected for abelian anyons in FQH systems
(\parencites[(4.9)]{WenNiu1990}[(4.14)]{IengoLechner1992}, reviewed in \parencites[(4.21)]{Fradkin2013}[(5.28)]{Tong2016}, cf. Fig. \ref{FigureTorusAlgebra}) but here appearing over the Brillouin torus of crystal momenta.

\begin{SCfigure}[1.7][htb]
\adjustbox{
}{
\begin{tikzpicture}[
  >={
    Computer Modern Rightarrow[
      length=4pt, width=4pt
    ]
  }
]

\draw[
  fill=lightgray,
]
  (0,0) rectangle (2,2);

\node at (1,.8) {%
  \clap{\smash{$T^2$}}%
};

\draw[
  line width=3,
  color=darkorange
] 
  (0,0) --
  (0,2);
\draw[
  ->,
  line width=1
]
  (0,1) --
  (0,1.01);

\draw[
  line width=3,
  color=darkorange
] 
  (2,0) --
  (2,2);
\draw[
  ->,
  line width=1
]
  (2,1) --
  (2,1.01);

\draw[
  line width=3,
  color=darkgreen
] 
  (0,2) --
  (2,2);
\draw[
  ->,
  line width=1
]
  (1,2) --
  (1.01,2);

\draw[
  line width=3,
  color=darkgreen
] 
  (0,0) --
  (2,0);
\draw[
  ->,
  line width=1
]
  (1,0) --
  (1.01,0);

\draw[
  fill=black
] 
(0,0) circle (.1);
\draw[
  fill=black
] 
(2,0) circle (.1);
\draw[
  fill=black
] 
(0,2) circle (.1);
\draw[
  fill=black
] 
(2,2) circle (.1);

\node at (2-.4,2+.5) {
  $\mathcolor{purple}{\widehat{\zeta}}^2\vert \psi \rangle$
};

\node[
  rotate=+135  
] at (2+.2, 2+.2) {%
  \clap{$\mapsto$}%
};

\node at (2.55,2-.25) {
  $\vert \psi \rangle$
};

\draw[
  line width=.5,
  |->,
]
  (2.55, 1.4) --
  node[
    xshift=0pt,
    scale=.75
  ] {\colorbox{white}{$
    \widehat W_b
      ^{-1}
  $}}
  (2.55,-.2);

\draw[
  line width=.5,
  |->,
]
  (2.4, -.47) --
  node[
    yshift=0pt,
    scale=.75
  ]{\colorbox{white}{$
    \widehat W_a^{-1}
  $}}
  (-.4, -.47);

\draw[
  line width=.5,
  |->,
]
  (-.55, -.2) --
  node[
    xshift=0pt,
    scale=.75
  ] {\colorbox{white}{$
    \widehat{W}_b
  $}}
  (-.55,2.3);

\draw[
  line width=.5,
  |->,
]
  (-.4, 2.45) --
  node[
    scale=.75,
  ]{\colorbox{white}{$
    \widehat{W}_a
  $}}
  (1, 2.45);
  
\end{tikzpicture}
}
\caption{\label{FigureTorusAlgebra}%
The relations \cref{TheLLFundamentalGroupGenerators,IrrepOfHeisenbergGroup} say that an adiabatic parameter loop of Bloch Hopfions around a basis of Brillouin torus cycles (cf. \cref{TheBrillouinTorus}) induces transformation of the topological states by multiplication with the square of a complex phase
$\zeta$, a root of unity, to be identified \cite[Prop. 3.21]{SS25-FQH} with an anyon braiding phase.}
\end{SCfigure}

\paragraph
{General Covariance and Modular Data}

In more detail, topological states, subject to the laws of topological field theory, are \emph{generally covariant} \cite{Witten1988}, in that they must see diffeomorphisms of space as gauge equivalences. Here this means \parencites[\S II.C]{SS25-Crys}[\S 2.2]{SS25-FQH} that the actual parameter space for crystalline topological phases is not quite the space $\mathcal{P}_{\mathrm{crs}}$ of Bloch Hamiltonians \cref{BlochHamiltonianParameterSpace} itself, but its homotopy quotient $(\text{--})\sslash(\text{--})$ (Borel construction) by the precomposition action with the (orientation-preserving) diffeomorphism group $\mathrm{Diff}^+\bracket({\text{--}})$:
\begin{equation}
\label
{GenerallyCovariantizedParameterSpace}
  \mathcal{P}_{\mathrm{crs}}^{\mathrm{cov}}
  :=
  \mathrm{Map}\bracket({
    T^d, 
    \mathrm{Gr}_v\bracket({
      \mathbb{C}^N
    })
  })
  \sslash
  \mathrm{Diff}^+\bracket({
    T^d
  })
  \mathrlap{\,.}
\end{equation}
Recall that the connected components of the diffeomorphism group of the 2-torus
\begin{equation}
\label{TheModularGroup}
  \pi_0 
  \, 
  \mathrm{Diff}^+\bracket({T^2})
  \simeq
  \mathrm{SL}\bracket({
    2,\mathbb{Z}
  })
  \simeq
  \langle
    S, T
  \rangle
\end{equation}
form the classical \emph{modular group}, generated by the familiar matrices (cf. \parencites[Ex. 2.19]{SS25-FQH})
\begin{equation}
\label
{ModularGenerators}
  S 
    = 
  \left[
  \begin{matrix}
    0 & 1 
    \\
    -1 & 0
  \end{matrix}
  \right]
  \,,
  \;\;
  \;
  T 
    =
  \left[
  \begin{matrix}
    1 & 1 
    \\
    0 & 1
  \end{matrix}
  \right]
  \mathrlap{\,.}
\end{equation}

\begin{proposition}[{\cite[Prop. 3.20]{SS25-FQH}}]
\label[proposition]{ModularActionOnHeisenberg}
  For even $C \in 2 \mathbb{Z}$, the isomorphism in \cref{TheLLFundamentalGroupGenerators} may be chosen so that the diffeomorphism action \cref{GenerallyCovariantizedParameterSpace} on $L L S^2 = \mathrm{Map}\bracket({T^2, S^2})$ gives the semidirect product 
  \begin{equation}
  \label
  {ModularSemidirectProductWithHeisenberg}
    \pi_1\bracket({
      \mathcal{P}^{\mathrm{cov}}_{\mathrm{crs}}
      ,C
    })
    \defneq
    \pi_1\bracket({
      \bracket({
        L L S^2
      })\sslash
      \mathrm{Diff}^+\bracket({T^2})
      , 
      C
    })
    \simeq
  \bracketmid\langle{
  \begin{array}{c}
    W_a,
    \\
    W_b,
    \\
    \zeta
  \end{array}
  }{
    \begin{aligned}
      \bracket[{W_a, W_b}]
      & = \zeta^2
      \\
      \bracket[{W_a, \zeta}] & = 1
      \\
      \bracket[{W_b, \zeta}] & = 1
      \\
      \zeta^{2C} & = 1
      \\
      W_a^C & = 1
      \\
      \zeta^{C^2} W_b^C & = 1 
    \end{aligned}
  \,}\rangle
    \rtimes
    \mathrm{SL}\bracket({
      2, \mathbb{Z}
    })
    \mathrlap{\,,}
  \end{equation}
  with the modular group element $ \left(\begin{matrix} p & q \\ r & s \end{matrix}\right) \in \mathrm{SL}\bracket({2,\mathbb{Z}})$ 
  \cref{TheModularGroup} acting on these generators as
  \begin{equation}
    \label
    {TheModularActionOnHeisenberg}
    \begin{tikzcd}[
      row sep=-2pt, column sep=10pt
    ]
      \pi_1\bracket({
        L L S^2, C
      })
      \ar[
        rr,
      ]
      &&
      \pi_1\bracket({
        L L S^2, C
      })
      \\
      W_a \zeta^C 
      &\longmapsto&
      \bracket({
        W_a \zeta^C
      })^p
      \bracket({
        W_b
      })^q
      \zeta^{-pq}
      \\
      W_b &\longmapsto&
      \bracket({
        W_a
        \zeta^C
      })^r
      \bracket({
        W_b
      })^s
      \zeta^{-rs}
      \\
      \zeta 
        &\longmapsto& 
      \zeta
      \mathrlap{\,.}
    \end{tikzcd}
  \end{equation}
\end{proposition}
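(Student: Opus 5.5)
The plan is to compute $\pi_1$ of the Borel construction through its fibration long exact sequence, and to read off the new relations and the modular action from the mapping class group acting on the torus cycles. The fibration is
\[
  L L S^2 \longrightarrow (L L S^2)\sslash \mathrm{Diff}^+(T^2) \longrightarrow B\mathrm{Diff}^+(T^2).
\]
By Earle--Eells, the identity component $\mathrm{Diff}^+_0(T^2)\simeq T^2$ acts by translations. So we may replace $\mathrm{Diff}^+(T^2)$ by $T^2\rtimes \mathrm{SL}(2,\mathbb{Z})$, with $\mathrm{SL}(2,\mathbb{Z})$ acting linearly on $\mathbb{R}^2/\mathbb{Z}^2$. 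The homotopy quotient is then computed in two stages: first by the translation torus $T^2$, then by the discrete group $\mathrm{SL}(2,\mathbb{Z})$.

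\textbf{Translation stage.} The long exact sequence gives
\[
  \pi_2 BT^2=\mathbb{Z}^2 \xrightarrow{\;\partial\;} \pi_1(L L S^2,C) \to \pi_1\bigl((L L S^2)\sslash T^2, C\bigr) \to \pi_1 BT^2 = 0 .
\]
So the translation quotient is the cokernel of $\partial$. The boundary map sends a generator of $\pi_1 T^2$ to the orbit loop $t\mapsto \widehat H(\,\cdot\,+t e_i)$ of a basepoint map of degree $C$. I will identify these orbit loops inside the Heisenberg presentation of \cref{LLFundamentalGroups}\ref{LLFundamentalGroupGenerators}. Following the Hansen / Larmore--Thomas description, $W_a,W_b$ come from the two cycles via the evaluation fibration $L L S^2\to L S^2 \to S^2$. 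Translating a degree-$C$ map once around a cycle should wind the corresponding cycle generator $C$ times, plus a Hopf-invariant correction coming from the framing self-linking. The expected outcome is $\partial(e_a)=W_a^{C}$, after the normalization $W_a\zeta^C\mapsto W_a$ that the statement hints at, and $\partial(e_b)=\zeta^{C^2}W_b^C$.

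This step is the main obstacle. I would attack it with a Pontryagin--Thom picture: realize the degree-$C$ map by $C$ localized skyrmions. Translating along a cycle then traces a framed 1-manifold in $T^2\times S^1$, and its Hopf/linking invariant gives the power of $\zeta$, namely $C^2$ from the pairwise linking of the $C$ preimage loops. The evenness $C\in 2\mathbb{Z}$ is needed here, so that the $\zeta$-corrections can be absorbed symmetrically, via $\zeta^{C}$ with $\zeta^{2C}=1$. It also makes the redefinition $W_a\mapsto W_a\zeta^C$ consistent, so that both generators transform uniformly under $\mathrm{SL}(2,\mathbb{Z})$.

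\textbf{Modular stage.} $\mathrm{SL}(2,\mathbb{Z})$ is discrete and the linear action fixes the origin, so the basepoint of degree $C$ can be chosen $\mathrm{SL}(2,\mathbb{Z})$-equivariantly up to homotopy. Because the action has a fixed point, the extension is a semidirect product. The action on the generators is determined as follows:
\begin{itemize}
  \item $\zeta$ is the local Hopfion, supported in a chart, and orientation-preserving diffeomorphisms preserve the Hopf invariant, so $\zeta\mapsto\zeta$.
  \item The cycle generators transform through the matrix action on $H_1(T^2)$.
  \item A product of cycle transports must be reordered into the normal form $(W_a\zeta^C)^p W_b^q$. The commutator relation $[W_a,W_b]=\zeta^2$ makes this reordering cost $\zeta^{-pq}$, which is the symmetric choice $\zeta^{-pq}$ in \cref{TheModularActionOnHeisenberg}.
\end{itemize}
Finally, I would check that these formulas define a homomorphism. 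It suffices to verify the relations $S^4=1$ and $(ST)^3=S^2$ on the generators, and that the new relations $W_a^C=1$ and $\zeta^{C^2}W_b^C=1$ are preserved. Evenness of $C$ is used once more here, to make the quadratic correction $\zeta^{-pq}$ well defined modulo $\zeta^{2C}$.
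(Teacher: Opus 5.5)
The paper gives no argument of its own for this statement; it imports it from \cite[Prop.~3.20]{SS25-FQH}. Your skeleton (Earle--Eells, then the Borel fibration split into a translation stage and a modular stage) is sensible. However, the two steps that carry the content are not established, and the splitting is argued incorrectly.

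You infer a semidirect product from the fact that $\mathrm{SL}(2,\mathbb{Z})$ fixes the origin of $T^2$. That only lets $\mathrm{SL}(2,\mathbb{Z})$ act on based maps. A splitting of $\pi_1\bigl((LLS^2)\sslash\mathrm{Diff}^+(T^2),C\bigr)\to\mathrm{SL}(2,\mathbb{Z})$ needs coherent lifts. A strict fixed point in the degree-$C$ mapping space would supply them, but none exists: already $f(x_1+x_2,x_2)=f(x_1,x_2)$ forces $f$ to be constant on every circle $\{x_2=\mathrm{const}\}$ with $x_2$ irrational (the orbits are dense). Hence $f$ factors through $S^1$ and has degree $0$. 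Choosing for each $M$ some path $f\simeq f\circ M$ gives only a set-theoretic lift. It is a homomorphism iff the chosen lifts satisfy $\tilde S^4=1$ and $(\tilde S\tilde T)^3=\tilde S^2$ in $\pi_1$ of the Borel construction.

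Your final check instead verifies these relations for the automorphisms $\phi_M$ of the Heisenberg quotient $H$. That only shows that $H\rtimes_\phi\mathrm{SL}(2,\mathbb{Z})$ is a well-defined group, which is not enough. Extensions of $\mathrm{SL}(2,\mathbb{Z})$ by $H$ inducing the given outer action form a torsor under $H^2(\mathrm{SL}(2,\mathbb{Z});Z(H))$. Here $Z(H)=\langle\zeta\rangle\cong\mathbb{Z}/2C$ with trivial action, and $H^2(\mathrm{SL}(2,\mathbb{Z});\mathbb{Z}/2C)\cong\mathrm{Ext}(\mathbb{Z}/12,\mathbb{Z}/2C)\cong\mathbb{Z}/\gcd(12,2C)\neq0$. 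So which extension the topology produces is a genuine computation, not a consequence of a fixed point on $T^2$.

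The translation stage is likewise read off from the statement rather than derived. For even $C$ one has $\zeta^{C^2}=(\zeta^{2C})^{C/2}=1$; this is also what makes $\phi_M$ preserve $W_a^C=W_b^C=1$. So the $C^2$ correction your Pontryagin--Thom picture aims at is invisible. What does matter is whether the orbit loop equals $W_a^C$, $W_a^C\zeta^C$, or some $W_a^C\zeta^j$ with $C\nmid j$. The last case could not be removed by any rescaling $W_a\mapsto W_a\zeta^k$, which only shifts $j$ by multiples of $C$. Which case occurs depends on a normalization of $W_a,W_b$ that you never fix.

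Your heuristic does not decide it. A count over pairs of the $C$ strands gives $2\binom{C}{2}=C^2-C$, and whether framing contributions restore $C^2$ is exactly the undetermined order-two element $\zeta^C$. Note also that translating along $a$ gives a map $T^2\times S^1\to S^2$ of degree $0$ on $a\times S^1$ and $\pm C$ on $b\times S^1$. So which generator gets raised to the $C$-th power is itself a convention you must pin down.

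To close the argument you need one explicit choice of basepoint map, generators, and paths $f\simeq f\circ S$, $f\simeq f\circ T$. Then, in that single normalization, you must actually compute both orbit loops together with $\tilde S^4$ and $(\tilde S\tilde T)^3\tilde S^{-2}$.
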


These results \cref{LLFundamentalGroups,ModularActionOnHeisenberg} predict the possibility of abelian anyons localized in FCI momentum spaces, in the form of Hopf solitons in the Bloch Hamiltonian. 
In fact, they already predict the possibility of nonabelian topological order, reflecting modular transformations $S,T \in \mathrm{SL}\bracket({2,\mathbb{Z}})$ \cref{ModularSemidirectProductWithHeisenberg}, but the anyon braiding phase $\zeta$ itself remains central, hence abelian (by \cref{ModularActionOnHeisenberg} above).

\medskip

However, this derivation is subject to a mathematical idealization, in that the above classifies topological monodromy \cref{EquivariantMonodromyInIntroduction} under arbitrary excursions of system deformations in the mathematical space of Bloch Hamiltonians. Realistic systems will instead explore only a bounded subspace of all theoretically available deformations.  

Our key result in \cref{Results} is that considering a natural quantification of such a bound actually refines and strengthens the prediction of Bloch Hopfion anyons.

\section
{Results}
\label{Results}

For a spectrally flattened Bloch Hamiltonian
\cref{BlochHamiltonianParameterSpace} of a 2-band system \cref{TheBlochSphere},
\begin{equation}
  \widehat H(\mathbf{k})
    =
  \mathbf{n}(\mathbf{k})
    \cdot
  \boldsymbol{\sigma}
  \;\;
  \text{with}
  \;\;
  \mathbf{n}(\mathbf{k}) \in S^2
  \subset \mathbb{R}^3
  \mathrlap{\,,}
\end{equation}
the quantum metric
\cref{TheQuantumMetric} 
is (cf. \cite[(8)]{Piechon2016})
\[
  g_{ij}
  =
  \tfrac{1}{4}\,
  \partial_{k_i}\mathbf{n}\cdot
  \partial_{k_j}\mathbf{n}
  \mathrlap{\,,}
\]
hence its $x$-average \cref{HybridWannierSpreadQuantumMetric} at a given $k_y$ is
\begin{equation}
\label
{AveragedQuantumMetricFor2Bands}
  \overline g_{xx}(k_y)
  =
  \tfrac
    {a_x^2}
    {16\pi^2}
  \int_0^1
    \bracket|{
      \partial_s\mathbf{n}\bracket({
        \tfrac{2\pi s}{a_x}
        ,
        k_y
      })
     }|^2\
     \,
     \mathrm{d}s
  \,
  \mathrlap{\,.}
\end{equation}
For an averaged metric bound 
\begin{equation}
\label
{MetricBound}
  M \in \mathbb{R}_{>  0}
\end{equation}
write
\begin{equation}
\label
{BoundedLoopSpace}
  L_{\leq M} S^2
  :=
  \bracketmid\{{
    \gamma 
  }{
    \tfrac
      {a_x^2}
      {16\pi^2}
    \int_0^1
    \bracket|{
      \partial_s \gamma
     }|^2\
     \,
     \mathrm{d}s
     \;\leq\; 
     M
  }\}
    \subset
  W^{1,2}\bracket({
    S^1, 
    S^2
  })
    \subset
  L S^2
\end{equation}
for the subspace of the free loop space on the absolutely continuous loops for which the Dirichlet functional \cref{AveragedQuantumMetricFor2Bands} is well-defined and bounded by $M$.

\begin{proposition}
  The first critical value of $M$ \cref{MetricBound} at which the homotopy type of $L_{\leq M} S^2$ \cref{BoundedLoopSpace} changes is 
  \begin{equation}
  \label
  {CriticalAverageQuantumMetric}
    M_1 :=
    \bracket({
    \frac
      {a_x}
      {2}
    })^2
    \mathrlap{\,.}
  \end{equation}
\end{proposition}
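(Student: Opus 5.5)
The plan is to apply Morse theory to the energy functional on the free loop space of the round 2-sphere. Write
\[
E(\gamma) = \tfrac{a_x^2}{16\pi^2}\int_0^1 |\partial_s\gamma|^2\,\mathrm{d}s .
\]
This is a positive constant multiple of the classical energy $\int_0^1|\dot\gamma|^2\,\mathrm{d}s$ on $W^{1,2}(S^1,S^2)$. The latter is a smooth Palais--Smale functional on a Hilbert manifold, so the homotopy type of the sublevel sets $L_{\leq M}S^2$ \cref{BoundedLoopSpace} can change only when $M$ passes a critical value. Conversely, it does change at a critical value, provided the critical manifold there has positive index or contributes non-trivial (co)homology. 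The critical points are the closed geodesics of the unit sphere parametrized proportionally to arclength, namely:
\begin{enumerate}
\item the constant loops, with $E=0$;
\item the great circles traversed $m\geq 1$ times. These have length $2\pi m$, hence energy $\int|\dot\gamma|^2 = (2\pi m)^2$, and therefore $E = \tfrac{a_x^2}{16\pi^2}(2\pi m)^2 = m^2\,(a_x/2)^2$.
\end{enumerate}
The smallest positive critical value is thus $m=1$, which gives exactly $M_1=(a_x/2)^2$ as in \cref{CriticalAverageQuantumMetric}.

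The first step is to show that the homotopy type is constant for $0<M<M_1$. There the only critical set is the constant loops, a nondegenerate (Morse--Bott) critical manifold $S^2$ of index $0$. The gradient flow of $E$, together with Palais--Smale, deformation retracts $L_{\leq M}S^2$ onto the constant loops, so $L_{\leq M}S^2\simeq S^2$. Alternatively, without invoking the flow: a loop of energy below that of a great circle has length $<2\pi$. Such a loop lies in an open hemisphere, or at least avoids a point antipodal to some point on it, and a continuous shrinking along geodesics to a center (for instance the Birkhoff curve-shortening process) retracts it to a point. This is the classical fact that short loops on $S^2$ contract canonically.

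The second step is to show that the homotopy type genuinely changes at $M_1$. The critical manifold at level $M_1$ is the space of parametrized oriented great circles, $\simeq \mathrm{SO}(3)\cong T^1S^2$. Its Morse index equals the number of conjugate points along one traversal, which is $1$ (the antipode in the interior), so the index is $1$. By Morse--Bott theory, $L_{\leq M_1+\epsilon}S^2$ is obtained from $L_{\leq M_1-\epsilon}S^2\simeq S^2$ by attaching a disk bundle of rank $1$ over $\mathrm{SO}(3)$. To see that this alters the homotopy type, I would argue via $\pi_0$ or $\pi_1$, or simply via the full space: $LS^2$ has components indexed trivially but nontrivial higher homology (for example $H_*(LS^2)$ is not that of $S^2$, with $\pi_1(LS^2)\cong\pi_1\Omega S^2\rtimes\ldots$ and $\pi_2(LS^2)$ containing $\pi_3S^2=\mathbb{Z}$). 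Only finitely many critical levels lie below any bound, and the level $M_1$ is perfect in the sense of Ziller's computation for symmetric spaces (the Morse--Bott function on $LS^2$ is perfect over $\mathbb{Z}_2$). Hence the attachment at $M_1$ contributes nonzero homology, and $L_{\leq M}S^2\not\simeq S^2$ for $M$ slightly above $M_1$.

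I expect the main obstacle to be making this last point rigorous: nondegeneracy of the great-circle critical manifold (the nullity equals $\dim\mathrm{SO}(3)=3$), and showing that the index-$1$ handle attachment is not homotopically trivial. My first attempt would be to cite Ziller/Klingenberg perfectness for $LS^n$. Failing that, I would exhibit directly a class in $\pi_1$ or $H_2$ of the sublevel set that appears at $M_1$, for instance the sweep-out of $S^2$ by latitude circles, which has maximal energy exactly $M_1$ and represents the generator relating to $\pi_3 S^2$, that is, the Hopfion $\zeta$.
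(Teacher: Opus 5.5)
Your proposal is correct and is essentially the paper's argument, which simply cites Morse--Bott theory for the energy functional: the computation $\tfrac{a_x^2}{16\pi^2}(2\pi m)^2=m^2(a_x/2)^2$, the Morse--Bott nondegenerate critical manifold $\mathrm{SO}(3)$ of once-traversed great circles (index $1$, nullity $3$), and Ziller's $\mathbb{Z}_2$-perfectness are exactly what that citation packages. One correction to your fallback: the latitude sweep-out is a one-parameter family of loops, so it gives the degree-one class in $\pi_1(LS^2)\cong\pi_2(S^2)$ (the generator of $\pi_1(Q)\cong\mathbb{Z}$ in the paper's later computation), not the $\pi_3(S^2)$ Hopfion $\zeta$; read this way it is already a complete elementary proof, since the sweep-out lies in $L_{\leq M_1}S^2$ while $\pi_1(L_{\leq M}S^2)\cong\pi_1(S^2)=0$ for $M<M_1$, and it shows that the change occurs already at $M=M_1$ itself.
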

\begin{proof}
  This follows from classical results of \emph{Morse--Bott theory}, recalled in \cite[\S 3.2]{GSS26-KfromM}.
\end{proof}

\begin{remark}
This bound \cref{CriticalAverageQuantumMetric} says that the root-mean-square width of the corresponding hybrid Wannier state \cref{HybridWannierSpreadQuantumMetric} in the $x$-direction is, by \cref{MetricBoundIsWannierSpreadBound}, at most half the crystal cell's $x$-width. Hence with this bound, the Wannier states are localized just on the scale of a lattice cell, in the $x$-direction.
\end{remark}

\begin{definition}[The configuration space of $x$-bounded Bloch Hamiltonians]
\label[definition]{SmallLoopSpace}
We abbreviate the bounded loop space \cref{BoundedLoopSpace} at the critical bound \cref{CriticalAverageQuantumMetric} of the averaged quantum metric by
\begin{equation}
\label
{TheSmallLoopSpace}
  L_1 S^2 
    :=
  L_{\leq M_1}
  S^2
\end{equation}
and call this the \emph{small loop space} of $S^2$ (following \parencites[\S 3.2]{GSS26-KfromM}).
This means that the corresponding subspace
\begin{equation}
\label
{SpaceOfBoundedBlochHamiltonians}
  L L_1 S^2
  \subset 
  L L S^2
  \simeq
  \mathrm{Map}\bracket({
    T^2,
    S^2
  })
\end{equation}
of the space of 2D 2-band Bloch Hamiltonians \cref{2Dim2BandBlochHamiltonianSpace} is that of those Hamiltonians whose $x$-averaged quantum metric is bounded by \cref{CriticalAverageQuantumMetric}.
\end{definition}

Dependence of topological quantum states on this bounded parameter space may support different topological order, and this is what we will classify now.

First, we have the following simplified description of \cref{TheSmallLoopSpace}, which the impatient reader may regard as a more explicit definition:

\begin{SCfigure}[1.5][htb]
\adjustbox{
  rndfbox=4pt,
  padding=2pt
}{
\includegraphics[width=6cm]
  {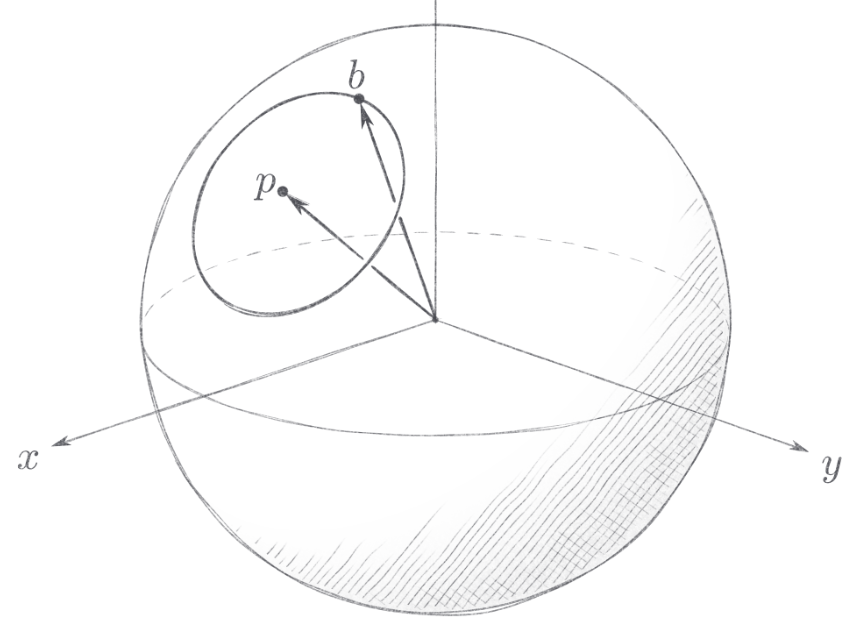}
}
\caption{\label{RoundCircle}
The space $L_1 S^2 \subset L S^2$ \cref{TheSmallLoopSpace} of \emph{small loops} in the (Bloch) sphere 
is homotopy equivalent (by \cref{SmallLoopsAsRoundCircles}) to its further subspace of loops that trace out round circles on $S^2$, once at a constant rate. 
\newline
These in turn are equivalently determined by their base point $b \in S^2$ and by the oriented rotation axis $p \in S^2 \subset \mathbb{R}^3$ that they are encircling clockwise.
\newline
Here constant loops are included, regarded as circles of vanishing radius that as such may be regarded as oriented either by the axis $p=b$ or $p = -b$. 
}
\end{SCfigure}

\begin{lemma}
\label[lemma]
{SmallLoopsAsRoundCircles}
  The space $L_1 S^2$ \cref{TheSmallLoopSpace} of ``small loops'' in $S^2$:
  \begin{enumerate}
  \item
  is homotopy equivalent more specifically to
  the space of loops that traverse (once, at a constant rate) round circles embedded in $S^2$, including degenerate circles corresponding to constant loops,
  \item
  as such may be coordinatized by pairs of points $(b,p) \in S^2 \times S^2$ indicating the loop's basepoint $b$ and the oriented axis $p$ encircled clockwise by their round-circle trajectory, subject only to the relation $\forall_{b \in S^2} : (b,b) \sim (b,-b)$, reflecting that a degenerate circle constant on a point $b$ is oriented either way \textup{(cf. \cref{RoundCircle})}:
  \begin{equation}
  \label
  {RoundCirclesAmongSmallLoops}
    \begin{tikzcd}[
      row sep=-1pt, column sep=small
    ]
      \frac{
        \mbox{$S^2 \times S^2$} 
      }{
        \forall_b : 
        (b,b) \sim (b,-b)
      }
      \ar[
        rr,
        "{ \sim }"
      ]
      &&
      L_1 S^2
      \\
      {[b,p]}
      &\longmapsto&
      \substack{
        \textup{loop $\gamma\!:\!S^1\!\to\!S^2$ starting at $\gamma(0)\!=\!b$} 
        \\
        \textup{going once clockwise at a constant rate}
        \\
        \textup{through the round circle around $p$.}
      }
    \end{tikzcd}
  \end{equation}
  \end{enumerate}
\end{lemma}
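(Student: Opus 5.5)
We will prove \cref{SmallLoopsAsRoundCircles} by Morse--Bott theory for the energy functional $E(\gamma) = \int_0^1 |\partial_s\gamma|^2\,\mathrm{d}s$ on $W^{1,2}(S^1,S^2)$, in the form recalled in \cite[\S 3.2]{GSS26-KfromM}. Up to the constant prefactor $a_x^2/16\pi^2$ in \cref{BoundedLoopSpace}, the bound $M_1 = (a_x/2)^2$ corresponds to $E \leq 4\pi^2$. This is exactly the energy of a great circle traversed once at constant speed, whose length is $2\pi$.

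\textbf{Step 1: critical points.} The critical points of $E$ on the free loop space of the round $S^2$ are the closed geodesics: the constant loops at energy $0$, and the $k$-fold great circles at energy $4\pi^2k^2$. Both families are nondegenerate in the Morse--Bott sense. The constant loops form a copy of $S^2$ of index $0$. The simple great circles form the manifold $V_1 \cong T^1S^2 \cong \mathbb{RP}^3$ of index $1$. Hence $E^{<4\pi^2}$ deformation retracts onto the constant loops, and $E^{\leq 4\pi^2}$ is obtained, up to homotopy, by attaching the negative bundle of $V_1$ along its unit sphere bundle. This is the homotopy type change at $M_1$ asserted in the preceding Proposition.

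\textbf{Step 2: the explicit subspace.} Let $R \subset L_1S^2$ be the set of constant-speed parametrizations of round circles, degenerate ones included. It is parametrized by the map \cref{RoundCirclesAmongSmallLoops}. A circle of angular radius $r$ has length $2\pi\sin r \leq 2\pi$, so it lies in $E^{\leq 4\pi^2}$, with equality exactly for great circles. To check that the parametrization is a homeomorphism onto $R$, take the basepoint $b$ and the oriented axis $p$ with angle $r = \angle(b,p) \in [0,\pi]$. The round loop is then uniquely determined, except when $r \in \{0,\pi\}$, where it is constant at $b$. This gives precisely the identification $(b,b)\sim(b,-b)$.

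\textbf{Step 3: retraction.} We show that $R$ is a deformation retract of $E^{\leq 4\pi^2}$, or at least that the inclusion is a homotopy equivalence. At the level of Morse--Bott data the two spaces match. In $R$, the constant loops $\{(b,\pm b)\}$ form the index-$0$ piece. The great circles ($p\perp b$) form $V_1$. For each great circle, the normal directions within $R$ are tilting $p$ towards $\pm b$, giving a $1$-dimensional interval of shrinking circles on either side of it. This is exactly the index-$1$ negative direction, which shrinks the great circle by "sliding it off" towards either pole. Thus $R$ is itself the attaching of a $1$-disk bundle over $V_1$ to $S^2$, the one dictated by the negative bundle of $E$.

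To make this rigorous, we compare via the Morse--Bott handle decomposition. Both spaces arise from $S^2$ by attaching the same disk bundle along the same map: the two endpoints of the negative line through a great circle go to its two poles $\pm p$ after flowing down. Hence the inclusion $R \hookrightarrow E^{\leq 4\pi^2}$ induces an equivalence on each stage, and by the gluing lemma it is a homotopy equivalence. A more hands-on alternative is a Birkhoff curve-shortening retraction. Loops with $E<4\pi^2$ have length $<2\pi$ and so lie in an open hemisphere, where they contract canonically to a point, for instance via the center of mass. This would then need to be matched with the round-circle flow.

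\textbf{Main obstacle.} The delicate point is identifying the negative-bundle attaching map precisely, so that the descending flow from a great circle lands on the constants at the poles. One must also handle non-smoothness of $R$ at degenerate circles, which we expect is resolved because the quotient in \cref{RoundCirclesAmongSmallLoops} is exactly the mapping cylinder structure.
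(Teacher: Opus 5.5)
Your proposal is correct and takes essentially the same route as the paper, whose proof simply invokes the Morse--Bott theory of the energy on the free loop space of the round sphere from \cite{Klingenberg1978}: constants of index $0$, and once-traversed great circles $V_1 \cong T^1S^2$ of index $1$ at energy $4\pi^2$, in analogy with the $L_1S^4$ case of \cite{GSS26-KfromM}. The step you flag as the main obstacle is what those cited results supply, and it needs no flow lines: exponentiating the negative direction, the constant unit normal field $p$ along a great circle, gives exactly the latitude circles around $p$ shrinking to $\pm p$, which already lie in $R$, so the handle for $E^{\leq 4\pi^2}$ can be taken inside $R$ and your gluing-lemma argument then goes through (your ``tilting $p$ towards $\pm b$'' normal direction differs from this only by a direction tangent to $V_1$).
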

\begin{proof}
  This follows by combining classical results in Morse-Bott theory given in \cite[\S 2.3--2.5]{Klingenberg1978} (using Thm. 2.4.10, Cor. 2.4.11 \& Prop. 2.5.3 there), in direct analogy with the description of $L_1 S^4$ that appears as \cite[Lem. 3.4]{GSS26-KfromM} (cf. the proof of Prop. 3.12 there).
\end{proof}

\subsection
{The FQH Layers}
\label
{TheFQHLayers}

Our central observation is the following variant of \cref{LLFundamentalGroups}:
\begin{theorem}[Fundamental groups of $L L_{1} S^2$]
\label[theorem]
{LLsmFundamentalGroups}
The parameter space \cref{SpaceOfBoundedBlochHamiltonians} of $x$-bounded Bloch Hamiltonians has:
\begin{enumerate}
\item
connected components still indexed by the Chern number
$C\in\mathbb{Z}$:
\begin{equation}
  \pi_0
  \bracket({
    L L_{1} S^2
  })
  \simeq
  \mathbb{Z}
  \mathrlap{\,,}
\end{equation}

\item
\label
{LLsmFundamentalGroupGenerators}
fundamental groups
\begin{enumerate}

\item 
for $C = 0$:
\begin{equation}
  \pi_1\bracket({
    L L_{1} S^2
    ,
    0
  })
  \cong
  \bracketmid\langle{
    \begin{array}{c}
      W_a,
      \\
      W_b,
      \\
      \{\zeta_i\}_{i \in \mathbb{Z}}
    \end{array}
    }{
    \begin{aligned}
      [W_a,W_b] &= \zeta_0^2,\\
      [W_a,\zeta_i] &= \zeta_{i+1}\zeta_i^{-1},\\
      [W_b,\zeta_i] &= 1,\\
      [\zeta_i,\zeta_j] &= 1
    \end{aligned}
    \;
    \begin{subarray}{l}
      \textup{for} \,
      i,j\,\in\mathbb{Z}
    \end{subarray}
    }
  \rangle
  \mathrlap{\,,}
\end{equation}

\item 
for $C\neq 0$:
\begin{equation}
\label
{TheLayerdHeisenbergGroupAtNontrivialC}
  \pi_1\bracket({
    L L_{1} S^2
    ,
    C
  })
  \cong
  \bracketmid\langle{\,
    \begin{aligned}
      W_a,
      \\
      W_b,
      \\
      \zeta_0\,,
      \\
      \smash{\vdots}\;,%
      \mathclap{\phantom{\zeta}}
      \\
      \zeta_{
        |C| - 1
      }
    \end{aligned}
    }{
    \begin{aligned}
      [W_a,W_b] &= \zeta_0^2,\\
      [W_a,\zeta_i] &= \zeta_{i+1}\zeta_i^{-1},\\
      [W_b,\zeta_i] &= 1,\\
      [\zeta_i,\zeta_j] &= 1,
      \\
        \zeta_0^2
        \zeta_1^2
          \cdots
        \zeta^2_{
          |C| - 1
        }
        &= 1
    \end{aligned}
    \;
    \begin{subarray}{l}
      \textup{for} \,
      i,j\,\in\mathbb{Z}_{/C}
    \end{subarray}
    }
  \rangle
  \mathrlap{\,.}
\end{equation}

\end{enumerate}
\end{enumerate}
\end{theorem}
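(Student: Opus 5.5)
The plan is to reduce everything to the homotopy type of $X := L_1 S^2$ via \cref{SmallLoopsAsRoundCircles}, and then compute $\pi_0$ and $\pi_1$ of the free loop space $L X = L L_1 S^2$ from the evaluation fibration. By \cref{RoundCirclesAmongSmallLoops}, $X$ is $S^2\times S^2$ with the diagonal $\Delta=\{(b,b)\}$ glued pointwise to the antidiagonal $\bar\Delta=\{(b,-b)\}$. These are two disjoint, simply connected, embedded $2$-spheres. Replacing the identification by a mapping-cylinder, i.e.\ attaching $S^2\times[0,1]$ with its two ends sent to $\Delta$ and $\bar\Delta$, shows that $X$ is homotopy equivalent to a ``self-gluing'' of $S^2\times S^2$. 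Van Kampen in its HNN form then gives $\pi_1(X)\simeq\mathbb{Z}=\langle t\rangle$. The universal cover $\widetilde X$ is an infinite chain of copies $Y_n\simeq S^2\times S^2$, $n\in\mathbb{Z}$, with $\Delta\subset Y_n$ glued to $\bar\Delta\subset Y_{n+1}$, and $t$ acts by $n\mapsto n+1$.

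Next I compute $\pi_2(X)=\pi_2(\widetilde X)=H_2(\widetilde X)$ (Hurewicz) as a $\mathbb{Z}[t^{\pm1}]$-module by Mayer--Vietoris along the gluing spheres. Write $H_2(Y_n)=\mathbb{Z}\langle e_n,f_n\rangle$ for the two factor classes, so that $[\Delta]=e_n+f_n$ and $[\bar\Delta]=e_n-f_n$, since the antipodal map has degree $-1$. Then $H_2(\widetilde X)$ is generated by the $e_n,f_n$ subject to
\[
  e_n+f_n=e_{n+1}-f_{n+1}\,.
\]
I then choose new generators: $\zeta_n$ a suitable translate of the gluing class, and one further class $w$ on which $t$ acts by $t\cdot w=w+2\zeta_0$. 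This factor $2$ is exactly the index of $\langle\Delta,\bar\Delta\rangle$ in $H_2(S^2\times S^2)$. The module is then $\pi_2 X\simeq \mathbb{Z}\langle w\rangle\oplus\bigoplus_{i\in\mathbb{Z}}\mathbb{Z}\langle\zeta_i\rangle$, with $t\zeta_i=\zeta_{i+1}$.

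For the free loop space I use the fibration $\Omega X\to LX\to X$. Item (i) follows because $\pi_0(LX)$ is the set of conjugacy classes of $\pi_1X=\mathbb{Z}$, i.e.\ the integer $C$ with $[\gamma]=t^C$. Matching this integer with the Chern number requires checking that a map $T^2\to X$ of class $t^C$ on the $k_y$-cycle pulls back the generator of $H^2(S^2)$ under evaluation to $C$. For item (ii), the long exact sequence at a loop $\gamma$ of class $t^C$ reads
\[
  \pi_2 X\xrightarrow{\;1-t^C\;}\pi_2X\to\pi_1(LX,\gamma)\to \pi_1X\xrightarrow{\;g\mapsto[g,t^C]\;}\pi_1X\,.
\]
Here $\pi_1(\Omega X,\gamma)\simeq\pi_2X$, and the first map is $x\mapsto x-t^C x$. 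The centralizer of $t^C$ is all of $\langle t\rangle$, and I lift $t$ to $W_a$. This yields an extension
\[
  1\to \pi_2X/(1-t^C)\to\pi_1(LX,\gamma)\to\mathbb{Z}\langle W_a\rangle\to 1\,,
\]
in which conjugation by $W_a$ is the $t$-action. Setting $W_b:=w$ gives $[W_a,W_b]=\zeta_0^{2}$ and $[W_a,\zeta_i]=\zeta_{i+1}\zeta_i^{-1}$, while $[W_b,\zeta_i]=1=[\zeta_i,\zeta_j]$ because the kernel is abelian.

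In the case $C=0$ the quotient by $1-t^0=0$ is trivial, which gives the infinite family $\{\zeta_i\}_{i\in\mathbb{Z}}$. In the case $C\neq0$ I quotient by $(1-t^C)$. On the $\zeta_i$ this identifies $\zeta_i\sim\zeta_{i+C}$, so the indices become $i\in\mathbb{Z}_{/C}$. On $w$ it gives $w-t^Cw=-2(\zeta_0+\cdots+\zeta_{C-1})$, and this is the relation $\prod_i\zeta_i^2=1$.

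The main obstacle I expect is the precise integral module computation of $\pi_2X$ and the identification of $w$. In particular, I need to make sure that the generators can be chosen so that the action is exactly $t w=w+2\zeta_0$, with no additional torsion or rank, and that the extension is really given by the $t$-action with no extra cocycle twist. I would first try to settle this with an explicit cellular model of $X$: one $0$-cell, one $1$-cell from the HNN gluing, and $2$-cells from the two $S^2$ factors. I would then read off the twisted cellular chain complex of $\widetilde X$ directly.
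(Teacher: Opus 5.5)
Your outline follows the paper's proof step by step: reduction to $Q=(S^2\times S^2)/\sim$, then $\pi_1(Q)\cong\mathbb{Z}$ by van Kampen in HNN form, then the universal cover as a chain of copies of $S^2\times S^2$. You then use Mayer--Vietoris plus Hurewicz to aim at the correct module structure $t\,w=w+2\zeta_0$, $t\,\zeta_i=\zeta_{i+1}$. Finally the evaluation fibration with boundary $1-t^C$ gives the semidirect product, and you derive the quotient relations exactly as the paper does.

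Two of your worries resolve quickly. First, the quotient $\mathbb{Z}\langle W_a\rangle$ is free, so the extension splits automatically. It is therefore determined by the conjugation action alone, which is the $\pi_1$-action on $\pi_2$ passed to the quotient; this is what the paper uses. Second, for item (i), represent $t$ by a path $s\mapsto(b_0,p(s))$ from $(b_0,b_0)$ to $(b_0,-b_0)$. The torus it sweeps out (rotate $b_0$ about the axis $p(s)$ by the angle $\theta$) covers $S^2$ exactly once, so $t^C$ lies in the component of degree $\pm C$.

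The one step that fails as written is your choice of generators of $\pi_2$: the $\zeta_i$ cannot be the deck translates of the gluing class. Write $g_n$ for the common image of $[\Delta]_n$ and $[\bar\Delta]_{n+1}$. The quotient of $H_2(\widetilde X)$ by the span of all $g_n$ is $\bigoplus_n \bigl(H_2(Y_n)/\langle e_n+f_n,\,e_n-f_n\rangle\bigr)\cong\bigoplus_{\mathbb{Z}}\mathbb{Z}/2$, not $\mathbb{Z}$. So no single class $w$ can complete the $g_n$ to a basis.

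The roles are the other way around: the gluing class is $w$. Take $\zeta_i:=f_i$, the class of $\{b\}\times S^2\subset Y_i$ (small loops at a fixed basepoint with varying axis). Take $w:=e_0-f_0=[\bar\Delta]_0$, the sphere of constant loops. Your relation, rewritten as $e_{n+1}=e_n+f_n+f_{n+1}$, eliminates every $e_n$ with $n\neq 0$. Hence $\{e_0\}\cup\{f_i\}_{i\in\mathbb{Z}}$, and therefore $\{w\}\cup\{\zeta_i\}_{i\in\mathbb{Z}}$, is a $\mathbb{Z}$-basis, with no torsion and no extra rank. Moreover $t\,w=e_1-f_1=e_0+f_0=w+2\zeta_0$ and $t\,\zeta_i=\zeta_{i+1}$. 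With this choice the rest of your argument goes through unchanged and yields the stated presentations for $C=0$ and $C\neq 0$.
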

\begin{proof}[Proof outline\textup{\footnotemark}]
\footnotetext{
  A \texttt{Lean-4} verification certificate of \cref{LLsmFundamentalGroups} is available at \href{https://ncatlab.org/schreiber/show/IIA+Anyons}{\texttt{ncatlab.org/schreiber/show/IIA+Anyons}}.
}
  With \cref{SmallLoopsAsRoundCircles} we are reduced to computing the fundamental groups of $L Q$ for \cref{RoundCirclesAmongSmallLoops}
  \begin{equation}
    Q := \bracket({ S^2 \times S^2 })/_\sim
    \mathrlap{\,.}
  \end{equation}
  Since $Q$ is the quotient of a connected space, it is connected:
  \begin{equation}
    \pi_0(Q) \simeq \ast
    \mathrlap{\,.}
  \end{equation}
  With the generalized van Kampen theorem (cf. \cite[\S~6.2, Case~2 and Prop.~6.2.2]{Geoghegan2008}),
  one finds
  \begin{equation}
    \pi_1(Q) 
      \simeq
    \mathbb{Z}\langle w_a \rangle
    \mathrlap{\,,}
  \end{equation}
  where we name group generators in angle brackets.
A model for the universal cover of $Q$ consists of an
infinite chain of copies of $S^2\times S^2$, with successive
copies glued along their diagonal and antidiagonal spheres.
Mayer--Vietoris and the Hurewicz theorem then identify
\begin{equation}
\label{pi2Q}
  \pi_2(Q)
    \cong
  \begin{aligned}
    \mathbb{Z}\langle  w_b \rangle
    \mathrlap{\oplus}
  \\
    \textstyle{\bigoplus_{i\in\mathbb Z}}
    \,
    \mathbb{Z}\langle \,z_i\, \rangle\mathrlap{,}
  \end{aligned}
\end{equation}
where under the canonical action
\inlinetikzcd{
  \rho_{(\text{--})}(\text{--})
  :
    \pi_1(Q) \times \pi_2(Q)
    \ar[r]
    \&
    \pi_2(Q)
}
we have
\begin{equation}
\label{waAction}
  \rho_{w_a}
  :
  \left\{
  \begin{tikzcd}[sep=0pt]
    w_b
      & \mapsto &
      w_b 
      &+& 
      2 z_0,
    \\
      z_i
      & \mapsto &
      & &
      z_{i+1}
      \mathrlap{\,.}
  \end{tikzcd}
  \right.
\end{equation}

The inclusion $L Q \subset \mathrm{Map}\bracket({T^2, S^2})$ is clearly a bijection on $\pi_0$, so that the components of $L Q$ are labeled by winding number $C\in\mathbb Z$, as claimed.
In component $C$, the homotopy exact sequence of the
free-loop evaluation fibration has boundary map
$\delta = 1 - \rho_{w_a}^C$, yielding a split exact sequence of this form:
\begin{equation}
\label
{TheSES}
\begin{tikzcd}[row sep=15pt, column sep=25pt]
  1
  \ar[r]
  &
  \frac
   {\mbox{$\pi_2(Q)$}}
   {%
    \rule{0pt}{7pt}%
    ({
      \rho_{w_a}^C - 1
    })
    \pi_2(Q)
  }
  \ar[
    rr,
    hook,
    "{ \iota_C }"
  ]
  &&
  \pi_1\bracket({
    \bracket({
      L Q
    })_C
  })
  \ar[
    rr,
    "{
      \mathrm{ev}_\ast
    }"
  ]
  &&
  \mathbb{Z}\langle w_a \rangle
  \ar[r]
  \ar[
    ll,
    downhorup,
    "{
      W_a \,\mapsfrom\, w_a
    }"{description}
  ]
  &
  1
  \mathrlap{\,,}
\end{tikzcd}
\end{equation}
with
\begin{equation}
  W_a 
    \,\iota_C[\text{--}]\, 
  W_a^{-1}
  =
  \iota_C
  \bracket[{\rho_{w_a}(\text{--})}]
  \mathrlap{\,.}
\end{equation}

This implies that our group is the semidirect product
\begin{equation}
\label
{TheSemidirectProductStructure}
  \begin{tikzcd}[
    row sep=-2pt, column sep=10pt
  ]
    \pi_1\bracket({
      \bracket({ LQ })_C
    })
    \ar[
      rr,
      <->,
      "{ \sim }"
    ]
    &&
    \frac
     {\mbox{$\pi_2(Q)$}}
    {%
      \rule{0pt}{7pt}%
      ({
        \rho_{w_a}^C - 1
      })
      \pi_2(Q)
    }
    \rtimes_{\rho}
    \mathbb{Z}\langle w_a \rangle
    \\
    W_a &\longleftrightarrow& w_a
    \\
    W_b &\longleftrightarrow& w_b
    \\
    \zeta_i &\longleftrightarrow& z_i
    \mathrlap{\,,}
  \end{tikzcd}
\end{equation}
where \cref{waAction} gives
\begin{equation}
  \begin{aligned}
    W_a \, W_b \,  W_a^{-1}
    & =
    \zeta_0^2 W_b
    \\
    W_a \, \zeta_i \,  W_a^{-1}    
    & =
    \zeta_{i + 1}
  \end{aligned}
\end{equation}
while \cref{pi2Q} gives
\begin{equation}
  \begin{aligned}
    W_b \, \zeta_i \,  W_b^{-1}    
    & =
    \zeta_{i}
    .
  \end{aligned}
\end{equation}
With this, the quotient on the left of \cref{TheSES} enforces, for $C > 0$,
\begin{equation}
  \begin{tikzcd}[sep=0pt]
  &
  W_a^C 
  \bracket({ W_b }) W_a^{-C}
  & = &
  W_b
  \\
  \Leftrightarrow
  &
  W_a^{C - 1} \bracket({
    \zeta_0^2 \, W_b 
  }) W_a^{-C + 1}
  & = &
  W_b
  \\
  \Leftrightarrow
  &
  W_a^{C - 2} \bracket({
    \zeta_1^2 \zeta_0^2 \, W_b 
  }) W_a^{-C + 2}
  & = &
  W_b
  \\
  \Leftrightarrow
  &
  & \vdots
  \\
  \Leftrightarrow
  &
  \zeta_{C - 1}^2 \cdots \zeta_1^2 \zeta_0^2 \, W_b 
  &=&
  W_b
  \\
  \Leftrightarrow
  &
  \zeta_{C - 1}^2 \cdots \zeta_1^2 \zeta_0^2
  &=&
  1
  \mathrlap{\,,}
  \end{tikzcd}
\end{equation}
and
\begin{equation}
  \begin{tikzcd}[sep=0pt]
  &
  W_a^C 
  \bracket({ \zeta_i }) W_a^{-C}
  & = &
  \zeta_i
  \\
  \Leftrightarrow
  &
  W_a^{C - 1} \bracket({
    \zeta_{i + 1} 
  }) W_a^{-C + 1}
  & = &
  \zeta_i
  \\
  \Leftrightarrow
  &
  W_a^{C - 2} \bracket({
    \zeta_{i + 2}
  }) W_a^{-C + 2}
  & = &
  \zeta_i
  \\
  \Leftrightarrow
  &
  & \vdots
  \\
  \Leftrightarrow
  &
  \zeta_{i + C}
  &=&
  \zeta_i
  \mathrlap{\,.}
  \end{tikzcd}
\end{equation}
Equivalent relations are implied for $C < 0$, while for $C = 0$ the quotient imposes no relation.

In summary, these are equivalently the claimed generators and relations.
\end{proof}

\begin{example}
\label
{LayeredHeisenbergAtChernNumber2}
  For Chern number $\vert C\vert = 2$, the group \cref{TheLayerdHeisenbergGroupAtNontrivialC} is equivalently presented as
  \begin{equation}
    \begin{aligned}
    &
    \pi_1\bracket({
      L L_1 S^2
      ,
      2
    })
    \simeq
    \\
    &
    \bracketmid\langle{\,  
    \begin{aligned}
      & W_a,
      \\
      & W_b,
      \\
      & \,\zeta_0,
      \\
      & \,\zeta_1
    \end{aligned}
    }{
    \begin{alignedat}{3}
      W_a \, W_a \, W_a^{-1}
      &
      =
      \phantom{\zeta_0^2}
      W_a
      \,,
      &\phantom{--}
      W_b \, W_a \, W_b^{-1}
      &
      =
      \zeta_0^{-2}
      W_a  
      \,,
      \\
      W_a \, W_b \, W_a^{-1}
      &
      =
      \zeta_0^2
      \, 
      W_b
      \,,
      &\phantom{--}
      W_b \, W_b \, W_b^{-1}
      &
      =
      \phantom{\zeta_0^{-2}}
      W_b      
      \,,
      \\
      W_a \; \zeta_0 \; W_a^{-1}
      & =
      \zeta_1
      \,,
      &
      W_b \; \zeta_0 \; W_b^{-1}
      & =
      \zeta_0
      \,,
      \\
      W_a \; \zeta_1 \; W_a^{-1}
      & =
      \zeta_0
      \,,
      &
      W_b \; \zeta_1 \; W_b^{-1}
      & =
      \zeta_1      
      \,,
    \end{alignedat}
    \;\;
    \begin{aligned}
      \phantom{A}
      \\
      \phantom{A}
      \\
      \zeta_0 \zeta_1 
      & = 
      \zeta_1 \zeta_0
      \\
      \zeta_1^2 
      &=
      \zeta_0^{-2}
    \end{aligned}
    \,}\rangle
    \,.
    \end{aligned}
  \end{equation}
\end{example}

\begin{proposition}[Comparison of monodromy as the quantum metric bound is lifted]
\label[proposition]
{thm:forgetting-small-circle-constraint}
\begin{enumerate}
\item
The defining inclusion \cref{SpaceOfBoundedBlochHamiltonians}
\begin{equation}
\label
{TheDefiningInclusion}
  \begin{tikzcd}[column sep=large]
    L L_{1} S^2
      \ar[r, hook, "\iota"]
    &
    L L S^2
  \end{tikzcd}
\end{equation}
induces, for each degree $C \in \mathbb{Z}$, a surjective homomorphism of fundamental groups, which in terms of the generators of 
\textup{\cref{LLFundamentalGroups} 
\cref{LLFundamentalGroupGenerators}}
and
\textup{\cref{LLsmFundamentalGroups} 
\cref{LLsmFundamentalGroupGenerators}}
is:
\begin{equation}
\label
{GroupHomomorphismsInducedByDefiningInclusion}
  \begin{tikzcd}[
    sep=0pt
  ]
    \pi_1 \bracket({
      L L_{1} S^2 , C
    })
      \ar[
        rr, 
        ->>, 
        "{ \iota_\ast } "
      ]
    &&
    \pi_1 \bracket({
      L L S^2 , C
    })
    \\
    W_a &\longmapsto& W_a,
    \\
    W_b &\longmapsto& W_b,
    \\
    \zeta_i &\longmapsto& \zeta
    \mathrlap{\,.}
  \end{tikzcd}
\end{equation}
Equivalently, this is quotienting by the relation
\begin{equation}
  \forall_{i,j}
  :
  \zeta_i \sim \zeta_j
  \mathrlap{\,.}
\end{equation}

\item
The kernels of the homomorphisms \cref{GroupHomomorphismsInducedByDefiningInclusion} are free abelian:
\begin{equation}
  \operatorname{ker}(\iota_\ast)
  \cong
  \begin{cases}
    \bigoplus_{\mathbb{Z} \setminus \{0\}}
    \mathbb{Z}
    &
    \textup{ if }
    C=0,
    \\[6pt]
    \bigoplus_{|C|-1}
    \mathbb{Z}
    & 
    \textup{ if }
    C \neq 0,
  \end{cases}
\end{equation}
with a free abelian basis given by
\begin{equation}
  \begin{cases}
    \big\{
      \zeta_i \zeta_0^{-1}
    \big\}_{i \in \mathbb{Z} \setminus \{0\}}
    & 
    \textup{ if }
    C=0,
    \\[4pt]
    \big\{
      \zeta_i \zeta_0^{-1}
    \big\}_{1 \leq i \leq |C|-1}
    & 
    \textup{ if }
    C \neq 0.
  \end{cases}
\end{equation}
\item
In particular, $\iota_\ast$ is an isomorphism precisely for $|C|=1$.

\end{enumerate}
\end{proposition}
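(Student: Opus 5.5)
The plan is to compute $\iota_\ast$ on the model used in the proof of \cref{LLsmFundamentalGroups}. By \cref{SmallLoopsAsRoundCircles} we replace $L_1 S^2$ by $Q$, and then view the inclusion $Q \hookrightarrow L S^2$ as sending $[b,p]$ to the round circle based at $b$ around $p$. Applying the free loop functor $L$ gives a map of free-loop evaluation fibrations over $Q \to LS^2$. Restricting to the fibres over the base loop gives a map of long exact sequences. The key input is the induced maps $\pi_1(Q)\to\pi_1(LS^2)$ and $\pi_2(Q)\to\pi_2(LS^2)$ on the base level.

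For the unbounded case, $\pi_1(LS^2)\simeq \pi_1(S^2)\times\pi_2(S^2)\cdot$(twisted) is better handled as $LS^2 \simeq S^2 \times \Omega S^2$ up to the twisting. One has $\pi_1(LS^2)\simeq\mathbb{Z}$, generated by the degree-one loop of loops. Also $\pi_2(LS^2)\simeq \mathbb{Z}\oplus\mathbb{Z}$, generated by the base sphere and the Hopfion $\pi_3(S^2)$, with $\pi_1$ acting by shifting the base class by twice the Hopfion. This is what reproduces \cref{LLFundamentalGroups} through the same split exact sequence \cref{TheSES}.

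I would then check the following on generators.
\begin{enumerate}
\item The generator $w_a$ of $\pi_1(Q)$, namely the loop sweeping circles from degenerate through to degenerate with the opposite orientation, maps to the generator of $\pi_1(LS^2)$.
\item The class $w_b$, the diagonal sphere of constant loops, maps to the base sphere class.
\item Each $z_i$ maps to the Hopfion class, since the $z_i$ are the $w_a$-translates of $z_0$ and the action is equivariant. The class $z_0$ is the sphere swept in one copy of $S^2\times S^2$ in the universal cover. Viewed as a 2-sphere family of loops, it is adjoint to a degree-one element of $\pi_3(S^2)$, which I would verify via the Hopf invariant of the associated map $S^3\to S^2$.
\end{enumerate}
Naturality of the five-lemma-style comparison of the split sequences \cref{TheSES} then yields \cref{GroupHomomorphismsInducedByDefiningInclusion}. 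Surjectivity is immediate, since $W_a, W_b, \zeta$ are all hit.

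For the kernel, I would use the semidirect product structure \cref{TheSemidirectProductStructure}. Since $\iota_\ast$ is an isomorphism on the $\mathbb{Z}\langle w_a\rangle$ quotient, the kernel lies in the abelian normal subgroup
\[
A_C:=\pi_2(Q)/(\rho_{w_a}^C-1)\pi_2(Q).
\]
There it is the kernel of the map $A_C\to \mathbb{Z}\langle w_b\rangle\oplus\mathbb{Z}\langle\zeta\rangle/(\ldots)$ that sends $w_b\mapsto w_b$ and $z_i\mapsto\zeta$. In the case $C=0$ we have $A_0=\pi_2(Q)$ free, and the kernel is spanned by the differences $z_i-z_0$, which is free on $i\neq 0$. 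In the case $C\neq0$, $A_C$ is generated by $w_b$ and $z_0,\dots,z_{|C|-1}$ modulo $2\sum z_i=0$. The kernel is spanned by the $z_i-z_0$, and one must show these have no torsion and no relation. The only relation available is $2\sum_i z_i=0$, which maps to $2|C|\zeta=0$, matching $\zeta^{2C}=1$. Working with the basis $z_0$, $z_i-z_0$, the relation reads $2|C|z_0+2\sum_{i\geq1}(z_i-z_0)=0$, and this does not obviously leave the $z_i-z_0$ free.

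This torsion bookkeeping is the main obstacle. I would try first to pick a basis adapted to the relation, using $u=\sum_i z_i$ together with the differences, and then compare with the sequence for $LS^2$ to identify the kernel exactly as $\mathbb{Z}^{|C|-1}$. Part (iii) then follows because the kernel vanishes iff $|C|-1=0$.
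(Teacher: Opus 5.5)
The paper states this proposition without a proof of its own; it is implicitly read off from the semidirect-product description \cref{TheSemidirectProductStructure} in the proof of \cref{LLsmFundamentalGroups}. Your strategy is the natural way to make that precise: map the split sequence \cref{TheSES} for $LQ$ to its analogue for $LLS^2$ by naturality of the evaluation fibrations, and read off $\iota_\ast$ from $\pi_1(Q)\to\pi_1(LS^2)$ and $\pi_2(Q)\to\pi_2(LS^2)$. Your identifications of $w_a$, $w_b$, $z_i$ are correct.

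Two small remarks on part (i). First, the Hopf-invariant check is unnecessary. Naturality of the $\pi_1$-action gives $2\,\iota_\ast(z_0)=\iota_\ast(\rho_{w_a}w_b-w_b)=\rho(s)-s=2h$ in the torsion-free group $\pi_2(LS^2)\cong\mathbb{Z}\langle s\rangle\oplus\mathbb{Z}\langle h\rangle$. This forces $\iota_\ast(z_0)=h$, with the correct sign. It must be done on $\pi_2$: after quotienting, $\iota_\ast(\zeta_0)^2=\zeta^2$ only fixes $\iota_\ast(\zeta_0)$ up to the order-$2$ element $\zeta^{|C|}$. Second, choose the lift $W_a$ of $w_a$ naturally, e.g.\ as rotation by $1/C$ of a $C$-fold iterated base loop (constant loops for $C=0$). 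This commutes with $\iota=L(Q\hookrightarrow LS^2)$, so $\iota_\ast W_a=W_a$.

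\textbf{The gap.} The real gap is where you stop. For $C\neq0$ you leave open whether the $z_i-z_0$ are free in $A_C$. You also assert without argument that they exhaust the kernel. The remedy you suggest does not work integrally: $u=\sum_i z_i$ together with the differences spans only an index-$|C|$ sublattice of $F:=\mathbb{Z}\langle z_0,\dots,z_{|C|-1}\rangle$, so it is not a basis.

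The expression you already wrote down settles the matter. Set $N=|C|$, let $\epsilon:F\to\mathbb{Z}$ be the coefficient sum, and let $D=\ker\epsilon$, which is free on $d_i=z_i-z_0$ for $1\le i\le N-1$. Put $r=2\sum_i z_i=2Nz_0+2\sum_{i\geq 1}d_i$.

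\begin{enumerate}
\item $\zeta$ has order exactly $2N$ in $\pi_1(LLS^2,C)$. Indeed $\langle W_b,\zeta\rangle\cong\mathbb{Z}\oplus\mathbb{Z}/2N$ is $\pi_2(LS^2)/(\rho^C-1)$, embedded by the sequence behind \cref{LLFundamentalGroups}. Hence an element $m\,w_b+[x]$ of $A_C=\mathbb{Z}\langle w_b\rangle\oplus F/\langle r\rangle$ lies in $\ker\iota_\ast$ if and only if $m=0$ and $\epsilon(x)\in 2N\mathbb{Z}$.
\item Because $\epsilon(r)=2N$ exactly, subtracting $\tfrac{\epsilon(x)}{2N}\,r$ moves $x$ into $D$. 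So the kernel is the image of $D$.
\item $D\cap\langle r\rangle=0$, because the $z_0$-coefficient of $kr$ is $2Nk\neq0$ for $k\neq0$.
\end{enumerate}

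Hence $\ker\iota_\ast\cong D\cong\mathbb{Z}^{N-1}$, with basis $\zeta_i\zeta_0^{-1}$. Put differently, the only torsion in $A_C$ is the order-$2$ class $[u]$, and it maps to $\zeta^{N}\neq 1$, so none of it enters the kernel. With this step supplied, and with your $C=0$ computation (which is correct), parts (ii) and (iii) follow.
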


\begin{proposition}
\label[proposition]
{ShearGroupAction}
  Under the restriction to bounded quantum geometry \cref{TheDefiningInclusion}, the modular group $\mathrm{SL}(2,\mathbb{Z})$-action \cref{TheModularActionOnHeisenberg} breaks to the \emph{shear subgroup} which preserves the ``oriented small direction'': 
  \begin{equation}
    \mathbb{Z}
    \simeq
    \bracket\langle{T}\rangle
    \simeq
    \bracketmid\{{
      \left(
      \begin{matrix}
        1 & n 
        \\
        0 & 1
      \end{matrix}
      \right)
    }{
      n \in \mathbb{Z}
    }\}
    \subset
    \mathrm{SL}\bracket({
      2,
      \mathbb{Z}
    })
    \mathrlap{\,,}
  \end{equation}
  whose generator $T \in \mathrm{SL}(2,\mathbb{Z})$ \cref{ModularGenerators} acts on the generators from \cref{LLsmFundamentalGroups} as follows:
  \begin{equation}
    \label
    {ShearActionOnPi1LL1S2}
    \begin{tikzcd}[
      row sep=-2pt, column sep=10pt
    ]
      \pi_1\bracket({
        L L_1 S^2, C
      })
      \ar[
        rr,
      ]
      &&
      \pi_1\bracket({
        L L_1 S^2, C
      })
      \\
      W_a
        &\longmapsto&
      W_a W_b \zeta_{-1}^{-1}
      \\
      W_b 
        &\longmapsto&
      W_b
      \\
      \zeta_i 
        &\longmapsto& 
      \zeta_i
      \mathrlap{\,.}
    \end{tikzcd}
  \end{equation}
\end{proposition}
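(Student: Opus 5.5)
The approach has two parts. First, identify which elements of $\mathrm{Diff}^+(T^2)$ act on $\mathrm{Map}(T^2,S^2)$ by precomposition while preserving the subspace $L L_1 S^2$ of \cref{SpaceOfBoundedBlochHamiltonians}. Second, compute the induced automorphism on the presentation of \cref{LLsmFundamentalGroups}, using \cref{thm:forgetting-small-circle-constraint} to fix everything except one residual index.

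\emph{Which modular elements survive.} The bound \cref{BoundedLoopSpace} constrains the Dirichlet energy of each horizontal loop $k_x \mapsto \widehat H(k_x,k_y)$. A linear representative $\left(\begin{smallmatrix} p & q\\ r& s\end{smallmatrix}\right)$ maps horizontal circles to horizontal circles only if $r=0$, hence $p=s=\pm1$. For $p=s=1$ these are the shears $(k_x,k_y)\mapsto(k_x+n\,k_y\,a_y/a_x,\,k_y)$, up to normalisation. Such a shear only rotates the parametrisation of each horizontal loop, so it preserves the energy exactly and therefore preserves $L L_1 S^2$. By contrast, $S$ exchanges the bounded and unbounded directions, and one exhibits a Hamiltonian in $L L_1 S^2$, for instance a $k_y$-winding map with large vertical energy, whose image violates the bound. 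Requiring in addition that the oriented small cycle $W_a$ be preserved excludes $-\mathbf 1$. This leaves the subgroup $\langle T\rangle\simeq\mathbb Z$. Homotopy-invariance of the action under isotopy of diffeomorphisms within this class is clear, because the rigid shears form a group acting strictly.

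\emph{The action of $T$.} By \cref{SmallLoopsAsRoundCircles} we pass to $LQ$. There $T$ acts fiberwise over $S^1\ni\theta$ by loop rotation: the small loop at $\theta$ is reparametrised by $\theta$. On round circles this is the map $(b,p)\mapsto(R_{p,\theta}\,b,\;p)$, with $R_{p,\theta}$ the rotation by angle $\theta$ about the axis $p$. The map commutes with the evaluation fibration up to the homotopy $R_{p,\theta}$. Consequently it acts on the fiber part $\pi_2(Q)/(\rho_{w_a}^C-1)$ of \cref{TheSES}, and $W_b\mapsto W_b$, $\zeta_i\mapsto\zeta_i$, since these are represented by spheres over a constant $\theta$, where the rotation is isotopic to the identity. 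On the section, $W_a$ goes to $W_a$ composed with the class of the torus $(\theta,\phi)\mapsto R_{p(\theta),\phi}\,b(\theta)$. Naturality with respect to $\iota$ and \cref{TheModularActionOnHeisenberg} for $p=q=s=1$, $r=0$ then force the image to be
\[
  W_a\mapsto W_a\,W_b\,\zeta_j^{-1}
\]
for some $j$. Indeed, downstairs we have $W_a\zeta^C\mapsto W_a\zeta^C W_b\zeta^{-1}$, and $\iota_\ast$ identifies all the $\zeta_i$. One also checks that every such assignment respects the relations: $[W_aW_b\zeta_j^{-1},W_b]=\zeta_0^2$, and conjugation of $\zeta_i$ by the new generator still gives $\zeta_{i+1}$. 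So the relations alone do not determine $j$.

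\emph{Main obstacle.} The hard step is pinning down $j=-1$, which no formal argument fixes; it requires an explicit geometric computation. I would compute the rotation torus above in the universal-cover model of $Q$, the chain of copies of $S^2\times S^2$ glued along diagonal and antidiagonal spheres. The plan is to track on which copy, i.e.\ relative to which gluing sphere $z_i$, the swept torus lies, starting from the basepoint loop representing $w_a$. The index comes from the $\theta$-interval over which the rotation axis $p$ passes through $\pm b$. There the degenerate-circle identification $(b,b)\sim(b,-b)$ forces the torus to cross back one sheet, giving $z_{-1}$ with a negative sign. The sign and index conventions must be fixed consistently with \cref{waAction}, and I would cross-check them against the $\pi_2$ class $w_b$ (the sphere of axes $p$), in which the torus's main contribution lies.
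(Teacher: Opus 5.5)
Your proposal leaves the decisive step open, and for $C\neq 0$ the plan for that step aims at something that is not well defined. The shear fixes no map of nonzero degree. A continuous $f$ with $f(k_x+n k_y a_y/a_x,\,k_y)=f(k_x,k_y)$, $n\neq 0$, is constant on each horizontal circle on which this translation has irrational rotation number. By continuity it then depends on $k_y$ alone, so it has degree $0$. So, although the rigid shears act strictly, for $C\neq 0$ the generator $T$ induces an automorphism of $\pi_1(LL_1S^2,C)$ only after you choose a path $\gamma$ from the basepoint to its $T$-image (equivalently, a lift of $T$ to the homotopy quotient). Replacing $\gamma$ by $g\cdot\gamma$ composes that automorphism with $\mathrm{Ad}_g$. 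Your candidates $\alpha_j\colon W_a\mapsto W_aW_b\zeta_j^{-1}$ are related in precisely this way. From $W_a^{-1}\zeta_kW_a=\zeta_{k-1}$ one gets $\zeta_kW_a\zeta_k^{-1}=W_a\zeta_{k-1}\zeta_k^{-1}$; since $\mathrm{Ad}_{\zeta_k}$ fixes $W_b$ and all $\zeta_i$, this gives $\mathrm{Ad}_{\zeta_{j+1}}\circ\alpha_j=\alpha_{j+1}$. Hence no computation of which gluing sphere the swept torus meets can single out $j=-1$ intrinsically; it can do so only relative to a path, which you never specify. The same unstated choice underlies your claim that $W_b$ and the $\zeta_i$ are fixed: if $\gamma$ achieves this, then $W_a^m\gamma$ gives $\zeta_i\mapsto\zeta_{i+m}$ and moves $W_b$. (Only for $C=0$, where constant maps are shear-fixed, is there a canonical choice; there your computation would be meaningful, but you have not carried it out.)

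The missing idea is that, for $C\neq 0$, the displayed formula is a \emph{normalization} achieved by choosing $\gamma$. Your naturality step also overclaims. It tacitly assumes that \cref{TheModularActionOnHeisenberg} was computed with a path compatible with $\iota\circ\gamma$, and even then it does not force a single $\zeta_j^{-1}$. It only yields $W_a\mapsto W_aW_b\,\xi$ with $\iota_\ast(\xi)=\zeta^{-1}$, i.e.\ $\xi\in\zeta_0^{-1}\ker(\iota_\ast)$, for instance $\xi=\zeta_0\zeta_1^{-2}$ when $|C|\neq 1$. However, conjugation by $g=\prod_k\zeta_k^{n_k}$ fixes $W_b$ and every $\zeta_i$ and multiplies $\xi$ by $\prod_k(\zeta_{k-1}\zeta_k^{-1})^{n_k}$. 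These elements exhaust $\ker(\iota_\ast)$, since they generate the same group as the basis $\zeta_i\zeta_0^{-1}$ of \cref{thm:forgetting-small-circle-constraint}. So once you know that, for some path, $T$ fixes $W_b$ and the $\zeta_i$ and sends $W_a$ into $W_aW_b\zeta_0^{-1}\ker(\iota_\ast)$, re-choosing the path by such a $g$ gives exactly $W_a\mapsto W_aW_b\zeta_{-1}^{-1}$. The resulting semidirect product is independent of this choice up to isomorphism. Separately, your first part tests only linear representatives. To exclude every diffeomorphism in a mapping class with $r\neq 0$, use test maps $f(k_x,k_y)=h(k_y)$ with $h$ of large energy. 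They show that any diffeomorphism preserving $LL_1S^2$ maps horizontal circles to horizontal circles, so its mapping class fixes the horizontal cycle up to sign.
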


\subsection
{The Topological Gates}

For a given parameter monodromy, the possible superselection sectors of topological orders are the corresponding irreducible unitary representations, whose operators are the predicted possible topologically protected quantum gates. We discuss the general classification and some interesting examples in our case (\cref{TheFQHLayers}), cf. the comments in \cref{OnTheGatesThatAppear} at the end.

\paragraph
{Classification for bounded quantum metric FCIs}

\begin{proposition}[Irrep classification]
\label[proposition]
{IrrepClassification}
  For $C \neq 0$, the finite-dimensional irreducible unitary representations of $\pi_1\bracket({L L_1 S^2, C})$ \cref{TheLayerdHeisenbergGroupAtNontrivialC} are, up to unitary isomorphism, labeled by:
  \begin{enumerate}
  \item
  orbits $[\chi]$ 
  under the automorphism
  $\rho_{w_a}$ \cref{waAction}
  of unitary characters
  \begin{equation}
  \label
  {TheCharacter}
    \begin{tikzcd}[
      row sep=-2pt, column sep=10pt, 
      /tikz/column 3/.append style={anchor=base west},
    ]
      \bracket\langle{
        W_b, 
        \bracket\{{\zeta_i}\}_{i=1}^{|C|}
      }\rangle
      \ar[
        rr,
        "{
          \chi
        }"
      ]
      &&
      \mathrm{U}(1)
      \\
      W_b 
        &\longmapsto&
      \phi_b
      \\
      \zeta_i
      &\longmapsto&
      \phi_i
        ,
        \;\;\;\;\;\;
        \bracket({
          \phi_1 \cdots \phi_{|C|}
        })^2 = 1
        \,,
    \end{tikzcd}
  \end{equation}
  \item 
  further phases 
  $\lambda \in \mathrm{U}(1)$,
  \end{enumerate}
  and are given on the $d$-dimensional Hilbert space 
  $\HilbertSpace :=
    \bigoplus_{j = 0}^{d-1}
    \mathbb{C}|j\rangle
  $, 
  for 

\begin{equation}
\begin{aligned}
  d
  &:=
  \min\bigl\{
    r\geq 1
    \,\big|\,
    \chi\circ\rho_{w_a}^{\,r}=\chi
  \bigr\}
  \\
  &=
  \min
  \bracketmid\{{
    r\geq 1
  }{
    \forall_i : 
    \phi_{i+r}=\phi_i 
    \,,\;
    \left(\prod_{s=0}^{r-1}\phi_s\right)^2=1
  }\}
\end{aligned}
\end{equation}
  the length of the orbit
  \textup{(a divisor of $|C|$)},
  by:
  \begin{equation}
  \label
  {OperatorsInAnIrrep}
   \begin{tikzcd}[
     ampersand replacement=\&,
     /tikz/column 3/.append style={anchor=base west},
     row sep=0pt, column sep=10pt
   ]
     \pi_1\bracket({
       L L_1 S^2, C
     })
     \ar[
       rr,
       "{
         V_{[\chi],\lambda}
       }"
     ]
     \&\&
     \mathrm{U}\bracket({
       \HilbertSpace
     })
     \\
     W_a  
       \&\longmapsto\&
     \left(
     \widehat{W}_a
       : 
       |j\rangle
       \mapsto 
       \begin{cases}
         {|j+1\rangle} 
         & \textup{ if } 0 \leq j < d-1
         \\
         \lambda |0\rangle 
         & \textup{ if } j = d-1
       \end{cases}
     \right)
    \\
    W_b 
      \&\longmapsto\&
    \left(\,
    \widehat{W}_b
      : 
    {|j\rangle}
      \mapsto
    \phi_b
    \bracket({
      \prod_{r=1}^{j}
      \phi_{-r}^{-2}
    })
    |j\rangle
    \right)
    \\
    \zeta_i
      \&\longmapsto\&
    \left(
      \;
      \widehat{\zeta}_i
      \,
      :
      |j \rangle
      \mapsto
      \phi_{i - j}
      \,
      |j \rangle
    \right)
    \mathrlap{\,.}
   \end{tikzcd}
  \end{equation}
\end{proposition}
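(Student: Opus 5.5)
The plan is to use Clifford--Mackey theory for the semidirect product decomposition \cref{TheSemidirectProductStructure} from the proof of \cref{LLsmFundamentalGroups}. Write
\[
  G := \pi_1\bigl(L L_1 S^2, C\bigr) \simeq A \rtimes_\rho \mathbb{Z}\langle W_a\rangle,
\]
where $A$ is the normal subgroup generated by $W_b$ and the $\zeta_i$. By the relations in \cref{TheLayerdHeisenbergGroupAtNontrivialC}, $A$ is abelian, since $[W_b,\zeta_i]=1$ and $[\zeta_i,\zeta_j]=1$. Its unitary characters are exactly the assignments $\chi$ in \cref{TheCharacter}: the indices are read mod $C$, and the constraint $(\phi_0\cdots\phi_{|C|-1})^2=1$ is the last relation. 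The automorphism $\rho_{w_a}$ acts on these characters by precomposition. Because $\rho_{w_a}^C$ is trivial on $A$, every orbit has length dividing $|C|$.

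First I will take a finite-dimensional irreducible unitary representation $V$ of $G$ and restrict it to $A$. The commuting unitaries $\widehat W_b,\widehat\zeta_i$ are simultaneously diagonalizable, so $V=\bigoplus_\chi V_\chi$ splits into joint eigenspaces. Conjugation by $W_a$ shows that $\widehat W_a$ maps $V_\chi$ onto $V_{\chi\circ\rho_{w_a}^{-1}}$. The sum over any single orbit is therefore $G$-invariant, so irreducibility forces the characters occurring in $V$ to form exactly one orbit $[\chi]$, of length $d$ as in the statement. The operator $\widehat W_a^{\,d}$ preserves $V_\chi$ and commutes there with $A$, which acts by scalars. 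Any $\widehat W_a^{\,d}$-eigenvector $v\in V_\chi$, with eigenvalue $\lambda$, then generates the $G$-invariant subspace $\mathrm{span}\{v,\widehat W_a v,\dots,\widehat W_a^{\,d-1}v\}$. Irreducibility therefore gives $\dim V_\chi=1$ and $\dim V=d$.

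Next I will set $|j\rangle:=\widehat W_a^{\,j}|0\rangle$ with $|0\rangle:=v$. These vectors lie in distinct eigenspaces, so they are orthonormal, and $\widehat W_a$ acts as the stated cyclic shift with $\widehat W_a|d-1\rangle=\lambda|0\rangle$. For the $\zeta_i$ I will use $W_a^{-j}\zeta_iW_a^{j}=\zeta_{i-j}$ to get $\widehat\zeta_i|j\rangle=\phi_{i-j}|j\rangle$. For $W_b$, inverting $W_aW_bW_a^{-1}=\zeta_0^2W_b$ gives $W_a^{-1}W_bW_a=\zeta_{-1}^{-2}W_b$. Iterating this yields $W_a^{-j}W_bW_a^{j}=\prod_{r=1}^{j}\zeta_{-r}^{-2}\,W_b$ (the factors commute), and hence the stated eigenvalue of $\widehat W_b$.

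Conversely, for any $\chi$ and $\lambda$ I will check that the formulas \cref{OperatorsInAnIrrep} satisfy all the relations of $G$. The wrap-around relation $W_a^d$ is consistent precisely because $\chi\circ\rho_{w_a}^d=\chi$. Unwinding that condition gives $\phi_{i+d}=\phi_i$ together with $(\prod_{s=0}^{d-1}\phi_s)^2=1$, where the latter comes from comparing $\chi$ and $\chi\circ\rho_{w_a}^{d}$ on $W_b$. Irreducibility of the constructed representation follows because the $d$ characters are distinct, so any invariant subspace is spanned by some of the $|j\rangle$, and $\widehat W_a$ connects them all.

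For the classification up to isomorphism, an isomorphism must match the orbits. Choosing a different base point $\chi\circ\rho_{w_a}^{-k}$ in the orbit amounts to relabelling $|j\rangle\mapsto|j+k\rangle$ and leaves $\lambda$ unchanged, because $\widehat W_a^{\,d}$ acts by the scalar $\lambda$ on the whole space. Hence $\lambda$ is an invariant, and the pair $([\chi],\lambda)$ determines the representation.

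I expect the main obstacle to be this bookkeeping rather than any conceptual step. It requires keeping the direction of the $\rho_{w_a}$-action on characters consistent with the index conventions $\phi_{i-j}$ and $\phi_{-r}$ in \cref{OperatorsInAnIrrep}, with the orbit condition including the $W_b$ constraint, and with the claim that $\lambda$ does not depend on the chosen base point.
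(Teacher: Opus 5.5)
Your proposal is correct and takes the same approach as the paper, which proves the statement only by invoking Mackey theory for the semidirect product $A\rtimes_\rho\mathbb{Z}\langle W_a\rangle$ from \cref{TheSemidirectProductStructure}; you carry out that argument explicitly (a single orbit, one-dimensional weight spaces, and $\widehat W_a^{\,d}=\lambda\,\mathbf 1$ as the little-group character), with index conventions matching \cref{OperatorsInAnIrrep}. One small imprecision, which does not affect your conclusion: changing the base point to $\chi\circ\rho_{w_a}^{-k}$ relabels $|j\rangle\mapsto|j+k\rangle$ only up to a factor $\lambda$ on the indices that wrap around past $d-1$.
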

\begin{proof}
  By \cref{TheSemidirectProductStructure} 
  our group is a semidirect product of the abelian group on the left of \cref{TheCharacter}. Therefore, Mackey theory implies that its irreps are characterized by orbits of its characters, as claimed here (cf. \parencites[Prop. 11.8]{CurtisReiner1962}[\S 8.2]{Serre1977}).
\end{proof}

\begin{remark}
[Cyclic Monodromy and Effective Quantum Hall Layers]

The irrep classification \cref{IrrepClassification} now allows
us to return to the effective quantum Hall layers of
Qi and Barkeshli--Qi~\cite{Qi2011,BarkeshliQi2012} surveyed in \cref{FCIQuantumGeometry}:
Recall that the Wannier spectral flow
\cref{HybridWannierSpectralFlow} organizes the hybrid Wannier
orbitals into $|C|$ families:
Continuation around the transverse momentum cycle connects
orbitals within each family, while translation by one lattice
spacing in the $x$-direction cyclically permutes the families
\cref{TranslationAndCyclicPermutation}.

Our monodromy representations exhibit just this structure.
The covariance relation
\begin{equation}
  \widehat W_a\,
  \widehat\zeta_i\,
  \widehat W_a^{-1}
  =
  \widehat\zeta_{i+1},
  \qquad
  i
    \in
  \mathbb Z_{|C|},
\end{equation}
is realized in \cref{OperatorsInAnIrrep} by cyclic transport
between character sectors, with
\begin{equation}
  \widehat\zeta_i
  \,
  |j\rangle
  =
  \phi_{i-j}|j\rangle,
  \qquad
  j=0,\ldots,d-1.
\end{equation}
Thus the braiding phase depends on the relative cyclic
position of the observable and the character sector.
\end{remark}

\begin{proposition}[Shear-Equivariant Extension of Irreps]
\label[proposition]{ShearCompatibleIrreps}
\begin{enumerate}
\item
An irrep $V_{[\chi],\lambda}$ \cref{OperatorsInAnIrrep},
\textup{from \cref{IrrepClassification}}, extends to a \textup{(necessarily also irreducible)} unitary representation $V_{[\chi],\lambda,\widehat{T}}$ of the semidirect product with the shear group \textup{(\cref{ShearGroupAction})}
\begin{equation}
  \begin{tikzcd}[
    row sep=12pt, column sep=large
  ]
    \pi_1\bracket({
      L L_1 S^2
      ,
      C
    })
    \ar[
      d,
      hook
    ]
    \ar[
      rr,
      "{
        V_{[\chi],\lambda}
      }"
    ]
    &&
    \mathrm{U}\bracket({
      \HilbertSpace
    })
    \\
    \pi_1\bracket({
      L L_1 S^2
      ,
      C
    })
    \rtimes
    \bracket\langle{
      T
    }\rangle
    \ar[
      urr,
      dashed,
      "{
        V_{
          [\chi],
          \lambda,
          \widehat{T}
        }
      }"{swap, sloped}
    ]
  \end{tikzcd}
\end{equation}
if and only if
\begin{equation}
\label{ShearExtensionDeterminantCondition}
  \det\bracket({
    \widehat W_b
  })
    =
  \det\bracket({
    \widehat\zeta_0
  })
  \mathrlap{\,,}
\end{equation}
which, in terms of the character parameters \cref{TheCharacter}, is equivalent to
\begin{equation}
\label{ShearExtensionCharacterCondition}
  \phi_b^{\,d}
  =
  \textstyle{\prod_{r=1}^{d}}
  \phi_{-r}^{\,2(d-r)+1},
\end{equation}
where the subscripts on $\phi$ are understood modulo $C$.

\item
Whenever the conditions \cref{ShearExtensionDeterminantCondition,ShearExtensionCharacterCondition} hold, 
the unitary equivalence classes of extensions form a $\mathrm{U}(1)$-torsor: given one extension $V_{[\chi],\lambda, \widehat{T}}$, any other is uniquely obtained by multiplying the operator $\widehat{T}$ representing $T$ by an element in $\mathrm{U}(1)$.
\end{enumerate}
\end{proposition}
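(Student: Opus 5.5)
The plan is to reduce the extension problem to a twisted-intertwiner problem for the irrep $V := V_{[\chi],\lambda}$, and then settle existence and uniqueness by Schur's lemma together with a determinant obstruction.

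\textbf{Step 1 (reduction to an intertwiner).} A unitary $\widehat T$ extends $V$ to $\pi_1\bracket({LL_1S^2,C})\rtimes\langle T\rangle$ if and only if
\[
  \widehat T\, V(g)\, \widehat T^{-1} = V(\tau(g))
\]
for all $g$, where $\tau$ is the shear automorphism \cref{ShearActionOnPi1LL1S2}. No further relation needs checking, because $\langle T\rangle\simeq\mathbb{Z}$ is free.

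So an extension exists if and only if the twisted representation $V\circ\tau$ is unitarily isomorphic to $V$. Any extension is automatically irreducible, since its restriction $V$ already is.

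For uniqueness: if $\widehat T$ and $\widehat T'$ both intertwine, then $\widehat T'\widehat T^{-1}$ commutes with the irreducible $V$. Schur's lemma makes it a scalar in $\mathrm U(1)$. Conversely, every phase multiple of $\widehat T$ again works. This gives the torsor statement of part (ii).

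There is one subtlety. Two extensions differing by a phase $c$ could still be unitarily equivalent, via some $U$ commuting with $V$. But such a $U$ is itself a scalar, so it fixes $\widehat T$, and distinct phases give inequivalent extensions.

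\textbf{Step 2 (identify $V\circ\tau$ by Mackey data).} The map $\tau$ fixes $W_b$ and every $\zeta_i$. Hence $V\circ\tau$ restricted to the normal abelian subgroup $A=\langle W_b,\zeta_i\rangle$ is the same as for $V$, and has the same character orbit $[\chi]$.

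The only change is
\[
  W_a\mapsto \widehat W_a' := \widehat W_a\widehat W_b\widehat\zeta_{-1}^{-1}.
\]
By \cref{IrrepClassification}, $V\circ\tau\cong V_{[\chi],\lambda'}$ for some $\lambda'$. The remaining invariant distinguishing the irreps with fixed $[\chi]$ is $\lambda$, and I would detect it by $\det \widehat W_a$, or equivalently by the scalar $\widehat W_a^{\,d}$ restricted to the $\chi$-isotypic line. The reason this works: in $V_{[\chi],\lambda}$ the operator $\widehat W_a$ is a cyclic shift with twist $\lambda$, so
\[
  \det \widehat W_a = (-1)^{d-1}\lambda .
\]
Therefore the extension exists if and only if
\[
  \det \widehat W_a' = \det \widehat W_a
  \quad\Leftrightarrow\quad
  \det \widehat W_b = \det \widehat\zeta_{-1}.
\]
The determinant of $\widehat\zeta_{-1}$ is the product of the $\phi_{-1-j}$ over the orbit. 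By the orbit periodicity $\phi_{i+d}=\phi_i$ this product is index-independent, so it equals $\det\widehat\zeta_0$. That yields \cref{ShearExtensionDeterminantCondition}.

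\textbf{Main obstacle.} The step I expect to be hardest is justifying that $\lambda$ (equivalently $\det\widehat W_a$) is a complete invariant once $[\chi]$ is fixed. The concern is that $\lambda$ might only be defined up to the choice of base character in the orbit, or up to a $d$-th root ambiguity. I would argue via Mackey: the stabilizer of $\chi$ is $A\rtimes\langle W_a^d\rangle$, and the irrep is induced from the extension of $\chi$ by $W_a^d\mapsto\lambda$. Computing $\widehat W_a'^{\,d}$ on $|0\rangle$ then gives $\lambda'$ explicitly as $\lambda$ times the product of the diagonal entries of $\widehat W_b\widehat\zeta_{-1}^{-1}$ along the cycle. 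This product is exactly $\det\widehat W_b/\det\widehat\zeta_{-1}$, which confirms the criterion directly without relying on $\det$ being an invariant.

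\textbf{Step 3 (character form).} Finally, substitute \cref{OperatorsInAnIrrep}:
\[
  \det\widehat W_b=\phi_b^{\,d}\prod_{j=0}^{d-1}\prod_{r=1}^{j}\phi_{-r}^{-2}
  =\phi_b^{\,d}\prod_{r=1}^{d}\phi_{-r}^{-2(d-r)},
\]
and
\[
  \det\widehat\zeta_0=\prod_{j=0}^{d-1}\phi_{-j}=\prod_{r=1}^{d}\phi_{-r},
\]
where the second equality uses periodicity mod $d$. Equating the two gives \cref{ShearExtensionCharacterCondition}.
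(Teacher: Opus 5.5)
Your proof is correct. The necessity half (taking determinants in the twisted relation for $\widehat W_a$, then $\det\widehat\zeta_{-1}=\det\widehat\zeta_0$), the Schur argument for part (ii), and the exponent count in Step 3 all coincide with the paper's.

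You differ only in the sufficiency half. The paper writes down a diagonal intertwiner $\widehat T|j\rangle=u_j|j\rangle$, with $u_0=1$ and $u_{j+1}=u_j\,b_j\,\phi_{-1-j}^{-1}$, and observes that the recursion closes ($u_d=u_0$) exactly under the determinant condition. You instead identify $V\circ\tau$ as $V_{[\chi],\lambda'}$ with $\lambda'=\lambda\,\det\widehat W_b/\det\widehat\zeta_{-1}$, and then invoke the uniqueness part of the Mackey classification.

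What each route buys:
\begin{itemize}
\item Your route explains conceptually why the determinant is the only obstruction: twisting by the shear fixes $A$ pointwise, so it can only move the label $\lambda$.
\item You also make explicit the freeness half of the torsor claim (distinct phases give inequivalent extensions, since any intertwiner is a scalar). The paper's proof of (ii) only shows that any two extensions differ by a phase.
\item The paper's route is self-contained. It does not depend on the completeness of \cref{IrrepClassification}, which the paper justifies only by citing Mackey theory, and it produces $\widehat T$ explicitly.
\end{itemize}

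The ``main obstacle'' you flag is harmless. In $V_{[\chi],\lambda}$ one has $\widehat W_a^{\,d}=\lambda\,\mathbf 1$, so $\lambda$ is manifestly an isomorphism invariant, independent of the chosen orbit representative.

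Moreover, consider the rescaled basis $|j\rangle\mapsto\widehat W_a'^{\,j}|0\rangle=\bigl(\prod_{k<j}b_k\phi_{-1-k}^{-1}\bigr)|j\rangle$ that puts $V\circ\tau$ into standard form. This basis change is precisely the paper's diagonal $\widehat T$. So once your argument is made concrete, it reproduces the paper's construction, and the two proofs rest on the same computation.
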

\begin{proof}
\begin{enumerate}
\item
By \cref{ShearActionOnPi1LL1S2}, an extension amounts to a choice of a unitary operator
$\widehat{T}$ such that
\begin{equation}
\label{ShearExtensionIntertwiningRelations}
\begin{aligned}
  \widehat{T}
  \,
  \widehat W_a
  \,
  \widehat{T}^{-1}
    &=
  \widehat W_a
  \,
  \widehat W_b
  \,
  \widehat\zeta_{-1}^{-1},
  \\
  \widehat{T}
  \,
  \widehat W_b
  \,
  \widehat{T}^{-1}
    &=
  \widehat W_b,
  \\
  \widehat{T}\widehat\zeta_i\widehat{T}^{-1}
    &=\widehat\zeta_i
    \qquad\text{for all }i.
\end{aligned}
\end{equation}
Taking determinants in the first relation gives
\[
  \det\bracket({
    \widehat W_b
  })
    =
  \det({
    \widehat\zeta_{-1}
  })
  \mathrlap{\,.}
\]
Since $\widehat\zeta_{-1}$ and $\widehat\zeta_0$
are conjugate by $\widehat W_a$, this proves necessity
of \cref{ShearExtensionDeterminantCondition}.

Conversely, suppose that
\cref{ShearExtensionDeterminantCondition} holds.
In the character basis of \cref{OperatorsInAnIrrep}, write
\[
  \widehat W_b
  |j\rangle
    =
  b_j|j\rangle,
  \qquad
  b_j
    :=
  \phi_b
  \textstyle{\prod_{r=1}^{j}}
  \phi_{-r}^{-2},
  \qquad
  0\leq j <  d.
\]
Now define unit complex numbers $u_j$ recursively by
\begin{equation}
\label{ShearImplementerRecursion}
  u_0=1,
  \qquad
  u_{j+1}
    =u_j\,b_j\,\phi_{-1-j}^{-1},
  \qquad
  0\leq j<d.
\end{equation}
Their product around the cyclic basis satisfies
\[
  \frac{u_d}{u_0}
  =
  \prod_{j=0}^{d-1}
     b_j\phi_{-1-j}^{-1}
  =
  \frac{
    \det\bracket({
      \widehat W_b
    })
  }{
    \det\bracket({
      \widehat\zeta_{-1}
    })}
  =1.
\]
Thus $u_d=u_0$, and the recursion closes consistently.
Set
\begin{equation}
\label{ShearImplementingOperator}
\begin{tikzcd}[sep=0pt]
  \HilbertSpace
    \ar[rr,"{\widehat{T}}"]
  &&
  \HilbertSpace
  \\
  {|j\rangle}
    &\longmapsto&
  u_j|j\rangle .
\end{tikzcd}
\end{equation}
This operator is unitary and commutes with
$\widehat W_b$ and every $\widehat\zeta_i$,
since these operators are diagonal in the same basis.
The recursion \cref{ShearImplementerRecursion} gives
the first relation in
\cref{ShearExtensionIntertwiningRelations}.
This also holds at the wrap from $|d-1\rangle$ to
$|0\rangle$: the factor $\lambda$ in $\widehat W_a$
occurs on both sides and cancels.
Hence $\lambda$ imposes no additional condition.

Since the shear subgroup $\langle  T \rangle \simeq \mathbb{Z}$ is infinite cyclic, there is no
further relation to impose on $\widehat{T}$.
The extension is therefore given by
\[
  V_{[\chi],\lambda,\widehat{T}}\bracket({g,T^n})
    :=
  V_{[\chi],\lambda}(g)
  \widehat{T}^{n}
  \mathrlap{\,.}
\]

Expanding the product in the recursion \cref{ShearImplementerRecursion} gives
\[
  u_d
  =
  \phi_b^{\,d}
  \,
  \textstyle{\prod_{r=1}^{d}}
    \phi_{-r}^{-\left(2(d-r)+1\right)}.
\]
Consequently $u_d=1$ is precisely
\cref{ShearExtensionCharacterCondition}.

\item
If $\widehat{T}'$ is another representation of $T$, 
then $\widehat{T}^{-1}\widehat{T}'$ commutes
with every $V_{[\chi],\lambda}(g)$.
By the assumption that $V_{[\chi], \lambda}$ is irreducible, Schur's lemma implies that $\widehat{T}^{-1}\widehat{T}'$ operates by multiplication with a complex number, and unitarity then implies that this complex number is in $\mathrm{U}(1)$. 
\qedhere
\end{enumerate}
\end{proof}

\newpage 
\paragraph
{Examples of Realizable Quantum Gates}
We highlight some familiar quantum gates (cf. \parencites{BarencoEtAk1995}[\S 4.5]{NielsenChuang2000}) that appear in the irreps classified by \cref{IrrepClassification}, insofar as their topological protection is noteworthy (cf. \cref{OnTheGatesThatAppear}). All of the following examples are chosen such that they are shear-equivariant according to \cref{ShearCompatibleIrreps}.

\subparagraph
{I. Diagonal gates}

The key effect on possible quantum gates of passing from the unbounded parameter monodromy \cref{TheLLFundamentalGroupGenerators} to its bounded counterpart \cref{TheLayerdHeisenbergGroupAtNontrivialC}
is that the $C$th power $\zeta^C$ of a single braiding phase---whose constraint
$\bracket({\zeta^{C}})^2 = 1$ quantizes that braiding phase to a root of unity---is relaxed to a product $\zeta_1 \cdots \zeta_{|C|}$ of a tuple of phases that are individually unconstrained apart from their global condition $\bracket({\zeta_1 \cdots \zeta_{|C|}})^2 = 1$ \cref{TheCharacter}.

As a result, all gates represented by diagonal matrices in the computational basis may be realized via \cref{OperatorsInAnIrrep}, possibly with an ancillary space in order to satisfy that global constraint in \cref{TheCharacter}. We highlight some noteworthy examples and record the Chern numbers $C$ needed to realize them.

\begin{example}[T-Gate Realization]
\label[example]{TGateRealization}
Consider $C=2$ as in \cref{LayeredHeisenbergAtChernNumber2}. 
If we take the parameter phases \cref{TheCharacter} to be:
\begin{equation}
\label
{PhaseChoiceForTGate}
  \phi_b
    =
  \phi_1
    =
  e^{\mathrm{i}\pi/8},
  \qquad
  \phi_2
    =
  e^{-\mathrm{i}\pi/8},
  \qquad
  \lambda 
    = 
  1
  \mathrlap{\,,}
\end{equation}
then \cref{IrrepClassification} demonstrates an irrep (topological order) of dimension $d=2$, given by:
\begin{equation}
\label
{TheQuantumGates}
\begin{tikzcd}[
  ampersand replacement=\&,
  /tikz/column 3/.append style={anchor=base west},
  row sep=-1pt, column sep=small 
]
  \pi_1\bracket({
    L L_1 S^2
    ,
    2
  })
    \ar[
      rr,
      "{ V_{\chi,1} }"
    ]
  \&\&
  \mathrm{U}\bracket({
    \mathbb{C}^2
  })
  \\
  W_a
    \&\longmapsto\&
  \widehat W_a
  =
  \begin{pmatrix}
    0 & 1 \\
    1 & 0
  \end{pmatrix}
  \\
  W_b,\;\zeta_1
    \&\longmapsto\&
  \widehat W_b=\widehat\zeta_1
  =
  \begin{pmatrix}
    e^{\mathrm{i}\pi/8} & 0 \\
    0 & e^{-\mathrm{i}\pi/8}
  \end{pmatrix}
  \\
  \zeta_2
    =
  \zeta_0
    \&\longmapsto\&
  \widehat\zeta_2
    =
  \widehat\zeta_0
    =
  \widehat\zeta_1^{-1}
  \mathrlap{\,.}
\end{tikzcd}
\end{equation}
In terms of quantum gates we see that:
\begin{enumerate}
\item 
  $\widehat W_a
    =
    X
    :=
    \begin{pmatrix}
      0 & 1 
      \\
      1 & 0
    \end{pmatrix}
  $ 
  is the \emph{Pauli $X$-gate}, 

\item 
\label
{TheTGateAppears}
$
  \widehat\zeta_0
  =
  e^{-\mathrm{i}\pi/8}\,T
  :=
  e^{-\mathrm{i}\pi/8}
  \begin{pmatrix}
    1 & 0 \\
    0 & e^{\mathrm{i}\pi/4}
  \end{pmatrix}
$
is (up to a global phase) the 
\emph{T-gate} (cf.
 \parencites
   [p. xxx]{NielsenChuang2000}).
\end{enumerate}
\end{example}

\begin{example}[Controlled Z-Gate Realization]
\label[example]{ControlledPhaseGateRealizations}
For $N \in \mathbb{N}_{\geq 2}$ with $C = 2^N$,
consider the following character phases in \cref{TheCharacter}:
\begin{equation}
  \begin{aligned}
  \phi_b 
    & = 
    e^{\mathrm{i}\pi/C},
  \\
  \phi_i
  &=
  \begin{cases}
    -1 & \text{ if } 
    i = 1 \pmod {2^N}
    \\
    +1 & \text{ otherwise } 
  \end{cases}
  \\
  \lambda
  & = 1\,.
  \end{aligned}
\end{equation}
In this case \cref{IrrepClassification} demonstrates an irrep given by:
\begin{equation}
\begin{tikzcd}[
  ampersand replacement=\&,
  /tikz/column 3/.append style={anchor=base west},
  row sep=0pt, column sep=small 
]
  \pi_1\bracket({
    L L_1 S^2
    ,
    C
   })
    \ar[rr,"{V_{\chi,1}}"]
  \&\&
  \mathrm{U}\bracket({
    \mathbb{C}^{2^N}
  })
  \\
  W_a
    \&\longmapsto\&
  \left(
    \widehat W_a
      :
    |j\rangle
      \longmapsto
    |j\!+\!1\rangle
  \right)
  \\
  W_b
    \&\longmapsto\&
  \left(
    \widehat W_b
    :
    |j\rangle
      \longmapsto
    e^{\mathrm{i}\pi/C}
    |j\rangle
  \right)
  \\
  \zeta_i
    \&\longmapsto\&
  \left(
    \;
    \widehat \zeta_i
    \,
    :
    |j\rangle
      \longmapsto
    \begin{cases}
      - |j\rangle 
      &\text{ if }
      i - j = 1 \pmod {2^N}
      \\
      + |j\rangle
      & \text{ otherwise }
    \end{cases}
  \right)
  \mathrlap{\,.}
\end{tikzcd}
\end{equation}
Identifying the Hilbert space with that of $N$ qubits, by binary expansion:
\begin{equation}
\label
{TheBinaryQBitEncoding}
\begin{tikzcd}[
  sep=0pt,
  /tikz/column 3/.append style={anchor=base west},
]
  \bracket({
    \mathbb C^2
  })^{\otimes N}
    \ar[
      rr,
      "\simeq"
    ]
    &&
  \mathbb{C}^{2^N}
  \\
  \lvert b_1\rangle
    \otimes\cdots\otimes
  \lvert b_N\rangle 
    &\longmapsto&
  \lvert j\rangle
  \,,\;\;
  j
    =
  \sum_{r=1}^{N} 
  2^{N-r}
  b_r,
  \quad b_r\in\{0,1\}
  \mathrlap{\,,}
\end{tikzcd}
\end{equation}
and observing that the binary expansion of $-1 = 2^N -1 \pmod {2^N}$ has $\forall_r : b_r = 1$,
we see that
\begin{equation}
\label
{GeneralControlledZGate}
  \widehat \zeta_0
  :
  |b_1,\ldots,b_N\rangle
  \longmapsto
  (-1)^{b_1\cdots b_N}
  |b_1,\ldots,b_N\rangle
\end{equation}
is the \emph{$(N-1)$-controlled $Z$ gate} $\mathrm{C}^{N-1}\mathrm{Z}$ (cf. 
\parencites
  [(1)]{GuhneEtA2014}
  {TakagiYoderChuang2017}
  [\S 4.1]{BeverlandEtAl2020}).
In particular:
\begin{enumerate}
\item
For $C = 2^2$, this \cref{GeneralControlledZGate} is the two-qubit \emph{controlled-Z gate}
\begin{equation}
\label{TheCZGate}
  \widehat\zeta_0
  =
  \mathrm{CZ}
  =
  \operatorname{diag}(1,1,1,-1).
\end{equation}
This is an entangling gate (cf. \parencites{BremnerEtAl2002}), since
\begin{equation}
  \mathrm{CZ}
  \bracket({
  \bracket({
    |0\rangle
    +
    |1\rangle
  })
  \otimes
  \bracket({
    |0\rangle
    +
    |1\rangle
  })
  })
  =
  |00\rangle
    +
  |01\rangle
    +
  |10\rangle
    -
  |11\rangle
\end{equation}
is an entangled state.

\item
\label
{CaseOfTheCCZGate}
For $C = 2^3$, 
this \cref{GeneralControlledZGate}
is the three-qubit
\emph{controlled-controlled Z-gate}
\begin{equation}
\label{TheCCZGate}
  \widehat\zeta_0
  =
  \mathrm{CCZ}
  =
  \operatorname{diag}(1,1,1,1,1,1,1,-1).
\end{equation}
This gate is in addition non-Clifford (cf. \parencites{CampbellTerhalVuillot2017}): conjugation sends
the Pauli operator $X\otimes I\otimes I$ to
\begin{equation}
  \mathrm{CCZ}
  (X\otimes I\otimes I)
  \mathrm{CCZ}^{-1}
  =
  X\otimes\mathrm{CZ}
  \mathrlap{\,,}
\end{equation}
which is not a Pauli operator.
\end{enumerate}
\end{example}
\begin{example}[Metaplectic Qutrit-Gate Realization]
  Similarly to \cref{ControlledPhaseGateRealizations} but for $C = 3$ we may choose
  \begin{equation}
    \phi_1 = -1
    \,,\;\;
    \phi_2
      =
    \phi_3
      =
    \lambda
      = 
    1
    \,,\;\;
    \phi_b
    =
    e^{\mathrm{i}\pi/C}
  \end{equation}
  and obtain the \emph{metaplectic qutrit gate} (cf. \cite{GlaudellEtAl2022})
  \begin{equation}
    \widehat \zeta_0
    =
    \mathrm{diag}(1,1,-1)
    \mathrlap{\,.}
  \end{equation}
\end{example}

\begin{example}[Controlled Phase-Gate Realizations]
\label[example]
{ControlledPhaseGateRealization}
In a joint generalization of \cref{TGateRealization,ControlledPhaseGateRealizations}, consider $C = 4$ with  
\begin{equation}
  \phi_1 
  =
  \varphi 
  \,
  e^{\mathrm{i}\pi/4}
  \,,\;\;
  \phi_2
    = 
  \phi_3
    =
  \phi_4 
    =
  \varphi
  \,,
  \;\;\;
  \phi_b 
    =
  \varphi
  \,,\;\;
  \lambda
    =
  1
  \mathrlap{\,,}
\end{equation}
for 
$
\varphi
  :=
  e^{ - \mathrm{i}\pi /16}
$.
This yields 
the \emph{controlled T-gate} 
\begin{equation}
  \widehat \zeta_0
  =
  \varphi
  \,
  \mathrm{diag}\bracket({
    1,
    1,
    1,
    e^{\mathrm{i} \pi/4}
  })
\end{equation}
and hence the \emph{controlled S-gate} (cf. \cite{HaahHastings2018})
\begin{equation}
  \widehat \zeta_0^2
  =
  \varphi^2
  \,
  \mathrm{diag}\bracket({
    1,
    1,
    1,
    \mathrm{i}
  })
\end{equation}
in addition to the controlled Z-gate $\propto \widehat \zeta_0^4$ from \cref{TheCZGate},
up to global phases.
\end{example}
\begin{example}[$\sqrt{\mathrm{T}}$-Gate Realization]
In the same $C = 4$ irrep of \cref{ControlledPhaseGateRealization},
the square root of the T-gate (cf. \cite[\S A.2]{BeverlandEtAl2020}),
\begin{equation}
  \sqrt{\mathrm{T}}
  :=
  \mathrm{diag}\bracket({
    1,
    e^{\mathrm{i}\pi/8}
  })
  \mathrlap{\,,}
\end{equation}
may also be realized, with an ancillary space and up to a global phase, as
\begin{equation}
  \widehat W_b
  \,
  \widehat \zeta_0
  \,
  \widehat \zeta_3^{-1}
  =
  e^{- \mathrm{i}\pi/16}
  \mathrm{id}_2 
    \otimes
  \sqrt{\mathrm{T}}
  \mathrlap{\,.}
\end{equation}
\end{example}

\subparagraph
{II. Non-diagonal gates}

The principal non-diagonal gate realized by the irreps of \cref{IrrepClassification} is the \emph{quantum incrementer}, represented by $\widehat W_a$ in \cref{TheLayerdHeisenbergGroupAtNontrivialC} whenever $\lambda = 1$. Since the split braiding phases $\zeta_i$ do not commute with $W_a$, we obtain interesting operator conjugations. 

\begin{example}[Realizing Magic State Stabilizers]
\label[example]{RealizingMagicStateProjectors}
 In the situation of \cref{ControlledPhaseGateRealizations} \cref{CaseOfTheCCZGate}, observe that in the binary qubit encoding
 \cref{TheBinaryQBitEncoding} we have 
 $
   \widehat W_a^4
   \bracket|{
     b_1, b_2, b_3
   }\rangle
   =
   \bracket|{
     b_1 + 1, b_2, b_3
   }\rangle   
 $
 and hence
 \begin{equation}
   \widehat W_a^4 
   =
   \mathrm{X}_1
   :=
   \mathrm{X} 
     \otimes 
   \mathrm{id}
     \otimes
   \mathrm{id}
   \in
   \mathrm{U}\bracket({
     \mathbb{C}^2 
       \otimes
     \mathbb{C}^2 
       \otimes
     \mathbb{C}^2 
   })\,.
 \end{equation}
 Therefore, with \cref{TheCCZGate}, we may realize the operator
 \begin{equation}
   \widehat \zeta_0
   \,
   \widehat W_a^4
   \,
   \widehat \zeta_0^{-1}
   =
   \mathrm{CCZ}
   \,
   \mathrm{X}_1
   \,
   \mathrm{CCZ}^{-1}
   =
   \mathrm{X}
   \otimes
   \mathrm{CZ}
   \mathrlap{\,.}
 \end{equation}
 This quantum gate and its cousins are used for the verification of CCZ magic states (cf. \parencites[\S 5.2]{HaahHastings2018}). 
\end{example}

\section
{Conclusion}

We set out to look for new theoretical pathways towards possibly realizing topological quantum gates on crystalline topological insulators, based on the previously possibly underappreciated observation of \emph{holonomic quantum computing}: The theoretically achievable
\footnote{
  Throughout, we are concerned here with the theoretically achievable quantum orders, leaving the question of their lattice or experimental realization to be addressed elsewhere.
}
topological orders are the \emph{parameter monodromies} of gapped ground-state Hilbert spaces classified in their superselection sectors by (finite-dimensional unitary) irreps of the fundamental groups of the gapped quantum system's external parameter space. Here we highlighted that, for crystalline insulators, the external parameters are the crystal's potential/binding parameters, which are encoded in the single-electron Bloch Hamiltonians. Hence:
\begin{standout}
  Possible topological quantum orders in crystalline insulators are represented by unitary representations of the fundamental groups of the space of maps from their Brillouin torus to the Grassmannian space of valence bands among the accessible bands.
\end{standout}

We specialized this general situation to the archetypical case of 2D 2-band insulators, where it was recently observed that these fundamental groups, when otherwise unconstrained, are integer Heisenberg groups, just those groups whose representations are the expected quantum states of abelian anyons on a torus, traditionally discussed as such for fractional quantum Hall (FQH) systems with their anyonic quasi-particles localized in real space. Here the torus is instead the Brillouin torus of crystal momenta, whence the anyons predicted are localized in momentum space, with their braids being Hopf solitons in paths of Bloch Hamiltonian deformations. 
\begin{standout}
  Possible topological quantum orders of 2D crystalline insulators with 2 accessible bands and no further constraints are just the momentum space duals of abelian FQH anyons on a torus.
\end{standout}

The new aspect studied here is the case where the deformation classes accessible to the topological insulator system are further subject to a bound on the averaged quantum metric along one basis $a$-cycle (in the $k_x$-direction) of the Brillouin torus. We observed that this bound equivalently localizes the projected $x$-position eigenstates of electrons in fractional Chern insulators to the scale of the crystal lattice. Qi \& Barkeshli had previously observed that such projected position eigenstate bases exhibit the FCI as a cyclic layer of $|C|$ quantum Hall systems. This raises the question:
\begin{standout}
  How does the cyclic layer of quantum Hall states exhibited by projected position eigenbases in FCIs support an analogously layered anyonic topological order?
\end{standout}

Due to the generality of the holonomic parameter monodromy classification, we could address this question by rigorous topological analysis: Theory predicts that this expected layered topological order must, if indeed it exists, be reflected in the irreps of the fundamental groups of the space of 2D 2-band Bloch Hamiltonians subject to the bound on their averaged quantum metric in the $x$-direction. Our main observation is therefore a mathematical theorem, and it turns out to state that
\begin{standout}
  When subjected to the bound on the $x$-averaged quantum metric, the single braiding phase $\widehat \zeta$ of the momentum-space localized Hopfionic Bloch anyons, subject to the root-of-unity constraint $\widehat \zeta^C = \pm I$, splits into a cyclic layer of $|C| \geq 1$ operators $\widehat \zeta_i$ subject jointly to the decomposed constraint $\widehat \zeta_1 \cdots \widehat \zeta_{|C|} = \pm I$, and transforming under lattice translation as expected from Qi-Barkeshli layers. 
\end{standout}

The key points here are:
\begin{enumerate}
\item
The emerging QH-layer structure is suggestive of how this abstractly classified topological order may be identified within the known orbital structure of FCIs,

\item 
the class of topological quantum gates in principle realizable on the quantum metric-bounded topological order is considerably richer than that of ordinary abelian FQH anyons seen without the metric bound, due to:
\begin{standout}
\begin{enumerate}
\item
  the appearance of tuples of largely freely tunable braiding phases $\zeta_i$, 
\item
  their mild non-abelianness, being cyclically permuted by conjugation with $x$-lattice translation operators $W_a$.
\end{enumerate}
\end{standout}

\end{enumerate}

\begin{remark}
\label[remark]
{OnTheGatesThatAppear}
\begin{enumerate}
\item
These two facts imply that the quantum metric-bounded topological order in principle allows the realization of protected quantum gates beyond the notorious restrictions of surface-code based approaches (\cite{CampbellTerhalVuillot2017}), such as T-gates.

\item
As the downside of its higher generality,
the theory developed here does not select specific braiding phases among a continuum of possible solutions to the single constraint $\zeta_1 \cdots \zeta_{|C|} = \pm 1$. The question of how specific choices may be stably realized in lattice models or experiment is not addressed here. 

\item And while mildly non-abelian braiding effects do appear here (cf. \cref{RealizingMagicStateProjectors}), the predicted set of topologically protected quantum gates remains far from universal under our assumptions.
\end{enumerate}
\end{remark}

In conclusion, we have shown, by revisiting theoretical foundations of topological quantum order in general and of crystalline insulators specifically, that:
\begin{standout}
There are interesting previously unrecognized theoretical possibilities for topological quantum gates potentially realizable on 2D 2-band crystalline insulators, compatible with the known cyclically layered quantum Hall orbital structure of fractional Chern insulators.
\end{standout}
While their actual realization in lattice models and experiment is beyond the scope of this letter, these theoretical considerations may help guide searches for such realizations.

\printbibliography

\end{document}